\newtheorem{theorem}{Theorem}
\newtheorem{lemma}{Lemma}
\newtheorem{conjecture}{Conjecture}
\begin{document}


\title{Scheduling with a processing time oracle\thanks{This work was partially supported by the French research agency (ANR-18-CE25-0008 and ANR-19-CE48-0016), by the EPSRC grant  EP/S033483/1 as well as by ANID, Proyecto FONDECYT 1211640 and Universidad de Santiago de Chile, Beca Olga Ulianova.}}

\author{
Fanny Dufossé\thanks{Univ. Grenoble-Alpes, Inria, CNRS, LIG, France}
\and 
Christoph Dürr\thanks{Sorbonne University, CNRS, LIP6, Paris, France}
\and 
Noël Nadal\thanks{Ecole Normale Supérieure Paris-Saclay, France}
\and
Denis Trystram\footnotemark[2]
\and
Óscar C. Vásquez\thanks{Industrial Engineering Department, Universidad de Santiago de Chile, Chile}
}

\maketitle

\begin{abstract}   In this paper we study a single machine scheduling problem
with the objective of minimizing the sum of completion times. Each of the
given jobs is either short or long.  However the processing times are
initially hidden to the algorithm, but can be tested. This is done by
executing a processing time oracle, which reveals the processing time of a
given job.  Each test occupies a  time unit in the schedule, therefore the
algorithm must decide for which jobs it will call the processing time oracle. 
The objective value of the resulting schedule is compared with the objective
value of an optimal schedule, which is computed using full information.  The
resulting competitive ratio measures the price of hidden processing times, and
the goal is to design an algorithm with minimal competitive ratio.

Two models are studied in this paper. In the \emph{non-adaptive} model, the
algorithm needs to decide beforehand which jobs to test, and which jobs to
execute untested.  However in the \emph{adaptive} model, the algorithm can
make these decisions adaptively depending on the outcomes of the job tests.  
In both models we provide optimal polynomial time algorithms following a
\emph{two-phase strategy}, which consist of a first phase where jobs are
tested, and a second phase where jobs are executed obliviously.  Experiments
give strong evidence that optimal algorithms have this structure. Proving this
property is left as an open problem.  \end{abstract}

\paragraph{Keywords:}
scheduling; uncertainty; competitive ratio; processing time oracle

\section{Introduction}

A typical combinatorial optimization problem consists of a clear defined
input, for which the algorithm has to compute a solution minimizing some cost.
This is the beautiful simple world of theory. In contrast, everyone who
participated in some industrial project can testify that obtaining the input
is one of the hardest aspects of problem solving.  Sometimes the client does
not have the precise data for the problem at hand, and provides only some
imprecise estimations.  However, imprecise input can only lead to imprecise
output, and the resulting solution might not be optimal.  Different approaches
have been proposed to deal with this situation, such as robust optimization or
stochastic optimization,
see \cite{garcia_robust_2018,gorissen_practical_2015,marti_stochastic_2015} for
surveys on those areas.  

In this paper we follow the paradigm of \emph{optimizing under explorable
uncertainty}, which is an alternative approach which has been introduced in
1991 \cite{kahan_model_1991}, and started to be applied to scheduling problems
in 2016 \cite{levi_scheduling_2016}.   In this approach, a problem instance
consists of a set of numerical parameters, the algorithm obtains as input only
an uncertainty interval for each one. The algorithm knows for each parameter
that it belongs to the given interval and has the possibility to make a query
in order to obtain the precise value.  Clearly a compromise has to be found
between the number of queries an algorithm makes and the quality of the
solution it produces.   This setting differs from a probabilistic one, studied
in \cite{levi_scheduling_2016,levi_scheduling_2018}, where jobs have weights
and processing times drawn from known distributions, and the algorithm can
query these parameters.  A seemingly similar problem has been studied in
\cite{durr_scheduling_2018,durr_adversarial_2020,albers_explorable_2020},
where the term \emph{testing} has a different meaning than in this paper.  



Different measures to evaluate the performance of an algorithm under
explorable uncertainty have been investigated.  Early work studied the number
of queries required in order to be able to produce an optimal solution, no
matter what the values of the non queried parameters happen to be.  In this
sense the queries are used to form a proof of optimality, and the underlying
techniques are close to the ones used in query complexity.

A broad range of problems has been studied, namely the finding the median 
\cite{feder_computing_2000}, or more generally the $k$-th smallest value among
the given parameters \cite{gupta_adaptivity_2017}, sorting
\cite{chaplick_query_2020}, finding a shortest path when the uncertain
parameters are the edge lengths \cite{feder_computing_2007}, determining the
convex hull of given uncertain points \cite{bruce_efficient_2005}, or
computing a minimum spanning tree
\cite{erlebach_computing_2008,focke_minimum_2017,megow_randomization_2017}. 
The techniques which have been developed in these papers have been generalized
and described in
\cite{erlebach_query-competitive_2016,erlebach_computing_2018}.

A related stochastic model for scheduling with testing has been introduced by
Levi, Magnanti and
Shaposhnik~\cite{levi_scheduling_2018,shaposhnik_exploration_2016}.  They
consider the problem of minimizing the weighted sum of completion times on one
machine for jobs whose processing times and weights are random variables with
a joint distribution, and are independent and identically distributed across
jobs. In their model, testing a job provides information to the scheduler (by
revealing the exact weight and processing time for a job, whereas initially
only the distribution is known). They present structural results about optimal
policies and efficient optimal or near-optimal solutions based on dynamic
programming.

\section{Our Contribution} 
\label{sec:contribution}

\begin{figure}[!ht]
\centerline{\includegraphics[width=\textwidth]{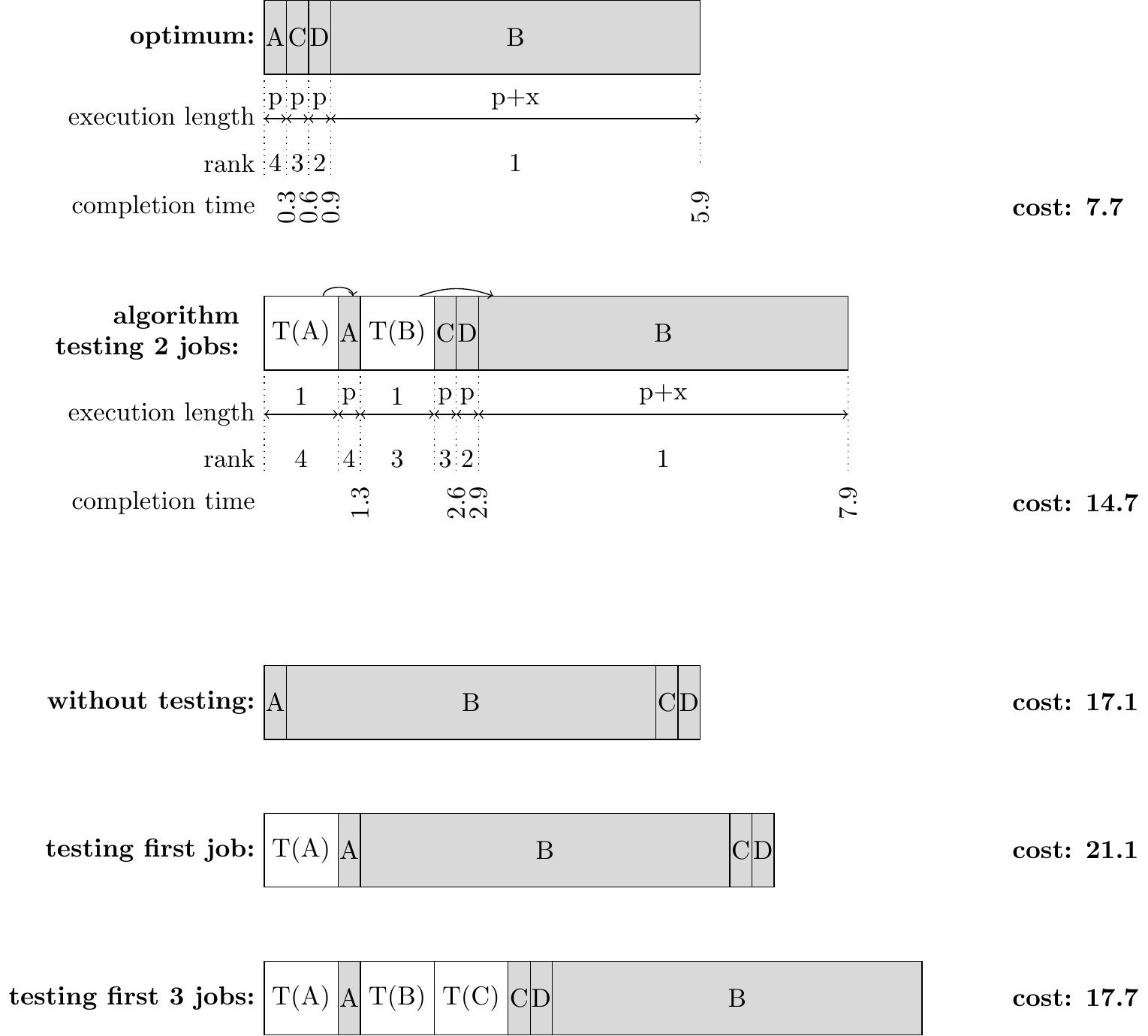}}
\caption{Some schedules with four jobs $A,B,C,D$ and parameters $p=0.3, x=4.7$.  
White boxes represent job tests, while gray boxes represent job executions. The cost is the sum of completion times or equivalently the sum of  the execution lengths of the boxes, multiplied with their respective ranks.
}
\label{fig:example_ABCD} 
\end{figure}

Before defining formally the problem, we start with an illustrative example.
Suppose that you need to schedule on a single machine four jobs called
$A,B,C,D$. Job $B$ has processing time 5, while jobs $A,C,D$ have processing
times $0.3$, see Figure~\ref{fig:example_ABCD}.  The cost of a schedule (also
called \emph{objective value}) is the total completion time of the jobs. Hence
one possible optimal execution order is $A,C,D,B$, which has objective value
$7.7$. However the processing times are initially unknown to you, the jobs are
indistinguishable, and you only know that the processing times are either
$0.3$ or $5$.  So you could schedule them in an arbitrary order, for example
the execution order $A,B,C,D$ would yield an objective value of $17.1$, which
is roughly $2.22$ larger than the optimum. Luckily you have access to a
processing time oracle, which allows you to query the processing time of a
particular job, and this operation takes $1$ time unit.  Such an oracle can be
thought as a machine learning black box predictor, which was trained on large
number of jobs.   Throughout the paper we use indistinguishable the word
\emph{test} or \emph{query} for this operation.  You could query this oracle
for all 4 jobs, providing full information, which allows you to schedule the
jobs in the optimal order.  However you
don't have to wait until you know the processing time of all jobs before you
start executing them.  For example if a query reveals a short job, then it could be
scheduled immediately, and if a query reveals a long job, then its execution
could be postponed towards the end of the schedule.   This means that there is
no benefit to query the last job, as the position of its execution would be
the same, regardless of the outcome. In summary, for this example you have the
possibility to query between 0 and 3 jobs. The resulting schedules are
depicted in Figure~\ref{fig:example_ABCD}. Testing the first 2 jobs would lead
to the smallest objective value of $14.7$, which is only roughly $1.91$ larger
than the optimum.



The general problem studied in this paper consists in scheduling $n$
independent jobs on a single machine, with jobs duration limited to two
possible values, $p$ and $p+x$, for some given parameters $p,x > 0$. Each job can
be tested, requiring one time unit on the schedule, revealing the actual
processing time. The objective value of a schedule is the sum of the
completion times of the jobs, and is also called the \emph{cost} of the
schedule.  This value is normalized by dividing it with the minimum cost over
all schedules, which is called the \emph{optimum}. Computing the optimum
requires full knowledge of the job's processing times. This ratio is called
the \emph{competitive ratio}, and the goal is to design an algorithm with
smallest possible competitive ratio. The ratio measures the price of
not knowing initially all processing times.

We emphasize on the importance of normalizing the objective value. The worst
case instance for the sum of completion times, consists of only long jobs, and
any test would strictly increase the objective value.  In other words, if the
goal were to minimize the objective value in the worst case (and not the
competitive ratio), then the best strategy would execute all jobs
untested.

Competitive analysis has been introduced during the 1980's to study
computational models where the input is a request sequence given sequentially
to the algorithm \cite{sleator_amortized_1985}.  Our problem does not fall
exactly into this setting, nevertheless we borrow its terminology.

Competitive ratio is usually seen as the value of a game played between an
algorithm deciding which jobs to test and an adversary deciding the job
lengths, hence generating the instance.  This is a zero sum game, and therefore it
admits an optimal algorithm and a schedule produced at equilibrium.

We study the competitive ratio in two algorithmic models. In the \emph{non
adaptive model}, the algorithm has to decide at the beginning of the
schedule how many jobs it wants to test,  while in the \emph{adaptive model}, it
can make this decision adaptively, depending on the outcome of previous tests.

At the beginning of the schedule, jobs are indistinguishable to the algorithm,
therefore we suppose that the algorithm processes the jobs in a fixed
arbitrary order.  As we show in Lemma~\ref{lem:postponing},  an optimal
algorithm  executes a tested short job right after its test, and delays the
execution of a tested long job towards the end of the schedule.  

With this observation in mind, the behavior of an algorithm can be fully
described by a binary decision for each job, namely to execute it untested, or
to test it and to execute it accordingly to the outcome as described above.  
From experiments, it seems that optimal algorithms follow a \emph{two-phase
strategy}, namely to first test some number of jobs, and then to execute the
remaining jobs untested (see Section~\ref{sec:experiments}). This does not specify where the tested jobs are executed, even though the optimal algorithm executes them in a specific manner (see previous paragraph and Lemma~\ref{lem:postponing}). 
In the example of
Figure~\ref{fig:example_ABCD} the first two jobs are tested and the the last
two jobs are executed untested.  Intuitively, this makes sense, as the benefit
of a test decreases with the progression of the schedule.   The purpose of a
test is to identify long jobs so they can be scheduled as late as possible. 
This allows to decrease the objective value, but the benefit depends on the
number of jobs whose processing time is not yet known.  Being unable to
provide a formal proof, we leave the proof of this conjectured dominant
behavior as an open problem, and focus in this paper only on two-phase
strategies.

\begin{conjecture} \label{conjecture}
	For all values of $p,x,n$, and for both the adaptive and non-adaptive models, 
	there is an optimal algorithm following a two-phase strategy.
\end{conjecture}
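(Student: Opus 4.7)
The plan is to prove the conjecture by an adjacent exchange argument that iteratively pushes every test to the beginning of the schedule. If an algorithm is not two-phase, then by definition there must exist a position $i$ where a job is executed untested and is immediately followed by a position $i+1$ where a job is tested. I would construct a modified algorithm $A'$ in which these two decisions are swapped and show that the competitive ratio of $A'$ is no larger than that of $A$. Iterating this exchange across all violations would yield an algorithm in which every test precedes every untested execution, which is precisely the two-phase property.

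For the non-adaptive model, the key step is a case enumeration over the four possible assignments the adversary can make to the two jobs at positions $i$ and $i+1$: short--short, short--long, long--short, and long--long. For each case, Lemma~\ref{lem:postponing} places every tested short job immediately after its test and defers every tested long job to the end of the schedule, which lets me track three quantities that fully determine the change in cost: the completion times of the two jobs themselves, the start-time shift propagated to positions $i+2, \ldots, n$, and the change in the number of long jobs in the deferred tail (which also shifts the positions of every tail job). The intended conclusion is that, after aggregating over the adversary's worst-case response and normalising by the optimum $\sum C_j$, the competitive ratio of $A'$ cannot strictly exceed that of $A$. For the adaptive model the same idea lifts to decision trees, swapping an internal ``untested'' node with a ``test'' node appearing in its subtree.

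The main obstacle is that the exchange is not pointwise cost-reducing. A direct calculation shows that when both jobs are short the swap strictly increases the local two-job contribution by one unit, when both are long the local contribution is unchanged, and only in the two mixed cases does the swap produce a shift of magnitude $p+x$ on all subsequent positions together with a $\pm 1$ change in the number of deferred long jobs. Since the adversary's choice of assignment and the value of the optimum are both instance-dependent, there is no uniform dominance of $A'$ over $A$ instance by instance, and the required inequality must be established in a coupled form that simultaneously tracks the adversary's best response and the baseline $\sum C_j$. In the adaptive setting the coupling is aggravated by the fact that the subtree rooted at a swapped test node may have been tailored to the outcome of the untested execution that now precedes it, so the decision tree must be structurally rebalanced rather than merely relabelled. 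I expect this coupled, structurally sensitive comparison to be the real difficulty that has so far kept the statement at the level of a conjecture.
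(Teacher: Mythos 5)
There is a genuine gap here, and it is worth being explicit that the paper itself does \emph{not} prove this statement: Conjecture~\ref{conjecture} is left open by the authors, supported only by exhaustive experiments for $n\leq 10$ (Section~\ref{sec:experiments}), so there is no reference proof your argument could be measured against. Your adjacent-exchange plan is the natural first attack, but as you yourself concede in the final paragraph, the swap of an $E$-decision at position $i$ with a $T$-decision at position $i+1$ is not pointwise cost-reducing: in the short--short case the modified algorithm's cost strictly increases by one unit. Since the competitive ratio is a min-max quantity, what you must show is $\max_v \mathrm{ratio}(A',v) \leq \max_v \mathrm{ratio}(A,v)$, and the adversary's maximizing response $v'$ against $A'$ need not coincide with its response $v$ against $A$; nor is $\mathrm{OPT}$ constant across the cases you enumerate. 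So the case enumeration over short--short, short--long, long--short, long--long does not aggregate into the required inequality by any argument you have actually supplied --- the ``coupled form'' you invoke is precisely the missing content, not a technical afterthought. Until that coupling is exhibited, the exchange step, and hence the whole induction, does not close.

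Two further points make the plan harder than your sketch suggests. First, in the adaptive model an untested execution also reveals the job's length \emph{a posteriori} (the adversary plays $v_j$ after every algorithmic move, tested or not; see Figure~\ref{fig:tree_example}), so the subtree of $A$ below the $E$-node at position $i$ branches on information that, after the swap, is no longer available at the corresponding point of $A'$. The modified algorithm is therefore not obtained by relabelling nodes; you must construct a new decision tree and prove it is simultaneously well-defined (uses only information it actually has) and no worse against every adversarial branch. Second, the paper explicitly warns (after the conjecture statement) that off the equilibrium path the optimal adaptive algorithm may \emph{need} to diverge from a two-phase strategy, so any correct proof must restrict the claim to the equilibrium schedule rather than to all branches of the tree; a blanket exchange over the whole tree is likely to be false as stated. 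In short, your proposal identifies the right obstacle but does not overcome it; the statement remains, as in the paper, a conjecture.
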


Here by \emph{optimal algorithm} we mean an algorithm which achieves the
smallest competitive ratio for the worst case instance, i.e.\ when the
adversary plays optimally.  In case the adversary does not play optimally, 
in the adaptive model the algorithm can achieve an even better competitive
ratio, but might need to diverge from a two-phase strategy for this.

For the non-adaptive model, we were able to provide an algorithm running in
time $O(n^2)$ which determines the optimal two-phase strategy.   Note that
this procedure is not polynomial in the input size, which consists only of the
values $p,x,n$, but it is polynomial in the number of jobs, and therefore also
in the size of the produced schedule. In addition we provide a closed form
expression of the limit of the competitive ratio, when $n$ tends to infinity.

For the adaptive model, we were also able to provide an algorithm running in
time $O(n^3)$.  More precisely, if there are still $r$ jobs to be handled, the
algorithm computes  in time $O(r^3)$ a strategy for the remainder (how many
jobs it wants to test).  The algorithm will stick to this strategy if the
adversary behaves optimally.   However once the adversary diverges from its
optimal strategy, the algorithm needs to recompute the strategy for the
remaining jobs.  This means that if the algorithm plays against the optimal
adversary it would spend time $O(n^3)$ to decide whether to test the first
job, and $O(1)$ for every subsequent job, otherwise it has time complexity
$O(n^3)$ on every job decision.

\section{Problem settings}

Formally, we consider the problem of scheduling $n$ independent jobs on a
single processor with the objective of minimizing the sum of completion times.
Jobs are numbered from 1 to $n$.  Every job can be either short (processing
time $p$) or long (processing time $p + x$) for some known parameters $p, x
> 0$. The algorithm receives $n$ jobs without the information of their
processing times and will handle them in order of their indices.  For every
job, the algorithm has the choice to test it or to execute it untested. A job
test consists in the execution of a processing time oracle whose duration is
one time unit and reveals to the algorithm the processing time of the job.  
In principle a tested job could be scheduled at any moment in the schedule,
but as we show in Lemma~\ref{lem:postponing}, it is dominant to schedule short
jobs immediately after their test and to postpone long tested jobs towards the
end of the schedule.

An example is pictured in Figure~\ref{fig:example_ABCD}. Here, the scheduler
decides to test the first two jobs. The first test indicates a short job that
is executed immediately, and the second test indicates a long job that is
postponed to the end of the schedule. The last two jobs are
executed untested.

Borrowing the terminology of online algorithms we consider the problem as a
game played between an algorithm and an adversary.  See
\cite{albers_online_2003,komm_introduction_2016} for recent introductions into
online algorithms.  But we emphasize that this is not an online problem, in
the sense that the algorithm receives from the beginning on all $n$ jobs.   It
chooses a strategy $u\in\{T,E\}^n$, while the adversary chooses a strategy
$v\in\{p,x\}^n$. Here we abuse notation and used symbols $T,E,p,x$, even
though $p,x$ are also used for the numerical parameters of the game. For every
job $j\in\{1,\ldots,n\}$, if $u_j=T$, then the algorithm tests the job,
otherwise the algorithm executes the job untested. If $v_j=p$ then job $j$ is
short, otherwise it is long.  The resulting competitive ratio is a function of
the strategies $u$ and $v$, and is formally defined by the pseudocode given in
Algorithm~\ref{fig:CR}. A two-phase strategy $u$ has the shape $T^*E^*$.

\begin{algorithm}
\begin{algorithmic}[0]
	\State{ALG = pn(n+1)/2}  \Comment{base cost}
	\State{$r = n$} \Comment{rank of next job to be processed}
	\For{$j=1,\ldots,n$}	\Comment{Process jobs in given order}
		\If{$u_j=E$ and $v_j=p$} \Comment{execute short untested job}
		\State{$r = r - 1$}
		\EndIf
		\If{$u_j=E$ and $v_j=x$}\Comment{execute long untested job}
		\State{ALG = ALG + $r x$}
		\State{$r = r - 1$}
		\EndIf
		\If{$u_j=T$ and $v_j=p$}\Comment{test short job and execute}
		\State{ALG = ALG + $r$}
		\State{$r = r - 1$}
		\EndIf
		\If{$u_j=T$ and $v_j=x$}\Comment{test long job and postpone}
		\State{ALG = ALG + $r$}
		\EndIf
	\EndFor
	\State{ALG = ALG + $x r ( r + 1)  / 2 $} 	\Comment{execute postponed long tested jobs}
	\State{$\ell = |\{j:v_j=x\}|$}	\Comment{number of long jobs}
	\State{OPT = $( p n(n+1) + x \ell(\ell+1))/2$}
	\State \Return{ ALG / OPT } 
\end{algorithmic}
\caption{Pseudocode defining the competitive ratio as a function of the
strategies $u\in\{T,E\}^n, v\in\{p,x\}^n$. ALG represents the cost of the
schedule produced by the algorithm and OPT the optimal cost.}

\label{fig:CR}
\end{algorithm}

The algorithm wants to minimize the competitive ratio, while the adversary
wants to maximize it.  In the non-adaptive model, the algorithm plays $u$, and
the adversary plays $v$ in response, while in the adaptive model, algorithm
and adversary play in alternation: First the algorithm plays $u_1$, then the
adversary plays $v_1$, then the algorithm plays $u_2$ and so on.

For convenience we describe a schedule with the string $u_1 v_1 \ldots u_n
v_n$, which encodes the decisions made by both the algorithm and the
adversary, and compactly describes the actual schedule. Such a string matches
the regular expression $((T|E)(p|x))^n$. For example the schedule resulting of testing 2 jobs in 
Figure~\ref{fig:example_ABCD} would be described by $TpTxEpEp$. When both the
algorithm and the adversary play optimally, then we call the resulting
schedule the \emph{equilibrium schedule}.  In this notation $Ep$ describes the
execution of an untested short job, $Ex$ the execution of a long untested job,
$Tp$ the test of a short job, followed immediately by its execution and $Tx$
the test of a long job, whose execution is delayed towards the end of the
schedule and does not explicitly appear in the notation.



\subsection*{Rank}

In this paper it will often be convenient to express the cost of a schedule
--- the total completion time --- using the notion of \emph{rank}.  A schedule
consists of a sequence of job tests and job executions. Each of these actions
has a rank which is defined as the number of jobs which are executed after
this part, including the action itself in case of a job execution.  This
permits us to express the total job completion time as the sum of all actions
in the schedule of the length of the action multiplied by its rank.  For
example in Figure~\ref{fig:example_ABCD} (algorithm testing two jobs), the
first test duration delays the completion time of all 4 jobs, it thus has rank
4. The second test delays all but the first job, therefore it has rank 3.  The
tested long job is postponed to the end of the schedule, and has rank 1.  The
two last jobs which are executed untested have rank respectively 3 and~2.

\subsection*{Two algorithmic models}

We study two models. In the \emph{adaptive model}, the algorithm can adapt
to the adversary after each step. This setting is studied in
Section~\ref{sec:adaptive}. In contrast, in the \emph{non-adaptive model} the
algorithm has to decide once for all on a sequence of testing and executing,
and stick to it, no matter what the job lengths happen to be. The results in
this setting are presented in Section~\ref{sec:nonadaptive}.


See Table~\ref{tab:example} for an illustration of the non-adaptive model. The
algorithm chooses a particular strategy (column) and the adversary chooses a
particular response (row). The resulting ratio is indicated in the selected
cell of this array. The equilibrium schedule is determined by the min-max
value of this array, namely $ExEp$ in this case, resulting in the ratio 11/7.

In the adaptive model, the interaction between the algorithm and the adversary
is illustrated by the game tree shown in Figure~\ref{fig:tree_example}. Every
node represents a particular moment of the interaction. Leaf nodes are labeled
with the resulting ratio. Inner nodes have two out-going arcs, representing
the possible actions of the algorithm or the adversary, which play in
alternation, and are labeled with the ratio resulting of a best choice. The
algorithm is the minimizer in this game, while the adversary is the maximizer.
The root is labeled with the competitive ratio of the game, which is 11/7 in our
example.  For this small example, the equilibrium schedule happen to be the
same in both the adaptive and the non-adaptive model, but for larger number of
jobs, the decision tree would be illegible.

\begin{table*}
$$
\arraycolsep=3pt\def\arraystretch{2}
\begin{array}{|c|c|c|c|c|}
\hline
&EE&ET&TE&TT\\
\hline
pp&\frac{3p}{3p}=1&\frac{3p+1}{3p}=\frac{4}{3}&\frac{3p+2}{3p}=\frac{5}{3}&\frac{3p+3}{3p}=2\\
\hline
px&\frac{3p+x}{3p+x}=1&\frac{3p+x+1}{3p+x}=\frac{8}{7}&\frac{3p+x+2}{3p+x}=\frac{9}{7}&\frac{3p+x+3}{3p+x}=\frac{10}{7}\\
\hline
xp&\frac{3p+2x}{3p+x}=\frac{11}{7}&\frac{3p+2x+1}{3p+x}=\frac{12}{7}&\frac{3p+x+2}{3p+x}=\frac{9}{7}&\frac{3p+x+4}{3p+x}=\frac{11}{7}\\
\hline
xx&\frac{3p+3x}{3p+3x}=1&\frac{3p+3x+1}{3p+3x}=\frac{16}{15}&\frac{3p+3x+2}{3p+3x}=\frac{17}{15}&\frac{3p+3x+4}{3p+3x}=\frac{19}{15}\\
\hline
\end{array}
$$
    \caption{The competitive ratio in the non-adaptive setting for 2 jobs, $p=1$ and $x=4$, as a function of the algorithm's strategy (columns) and the adversarial strategy (rows).}
    \label{tab:example}
\end{table*}
\begin{figure*}
    \centerline{\includegraphics[width=9cm]{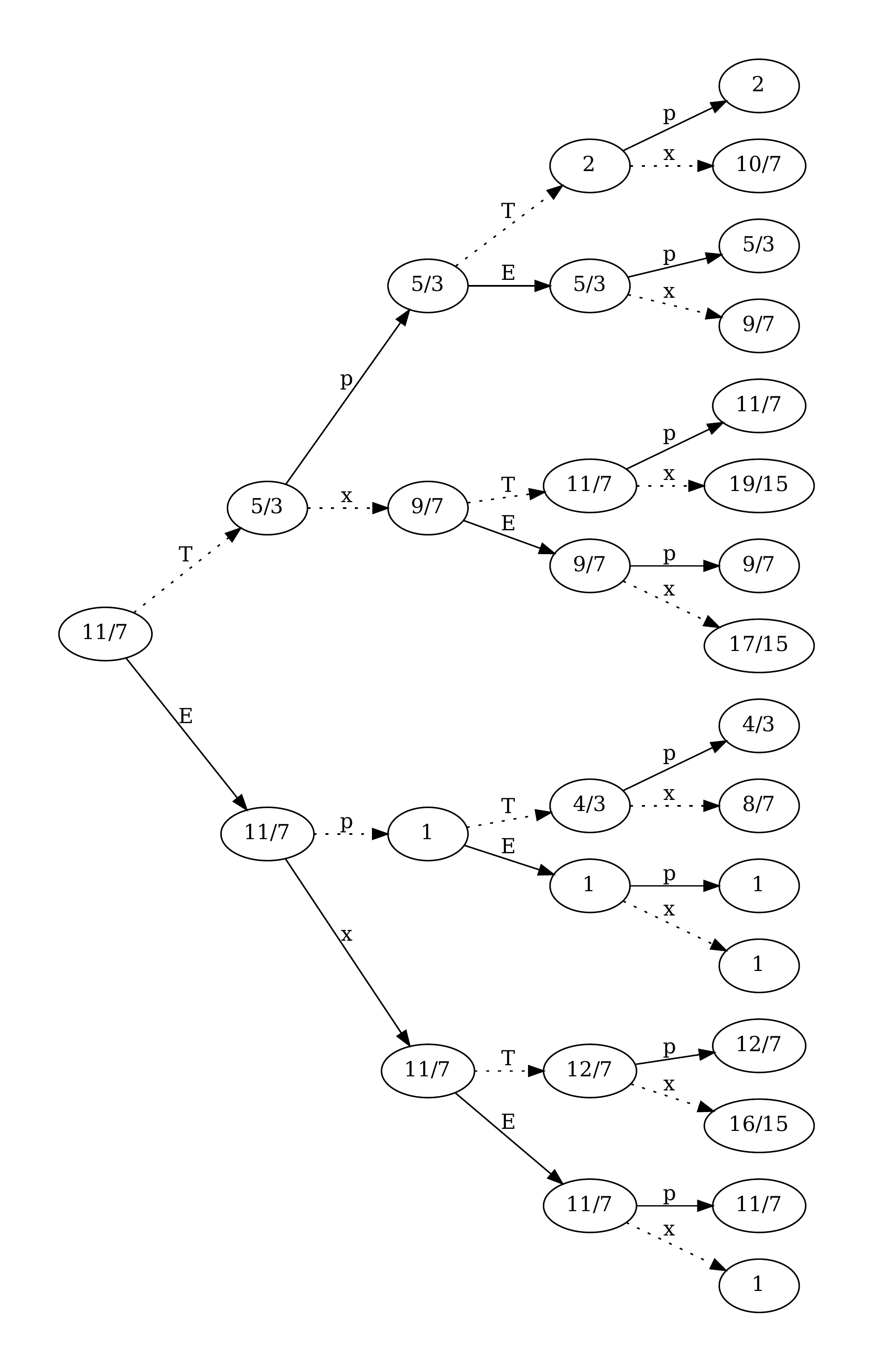}}
  \caption{Decision tree in the adaptive setting for 2 jobs, $p=1$ and $x=4$. Solid arcs indicate the optimal choice for the player in turn.}
  \label{fig:tree_example}
\end{figure*}

\section{Dominance properties}

The algorithm processes jobs in given order. At any moment it can decide to test the next job in this order, or to execute it untested, or to execute a previously tested job.  In this section we show some dominant behavior of optimal algorithms.  

We say that an algorithm is \emph{postponing} if it executes all tested long jobs at the end of the schedule and executes every tested short job immediately after its test.  We will show that any optimal algorithm has this property.  In addition we say that the adversary is \emph{restricted} if it is forced to make the last job long.

We consider the game tree resulting of an adaptive algorithm (not necessarily postponing) playing against an optimal adversary. For every node where the algorithm is in turn, 3 actions are possible, namely to test a job, to execute an untested job or to execute a tested long job. After the first two actions the adversary is in turn, while after the last action the algorithm is still in turn.  Figure~\ref{fig:tree_example} shows only the first two actions.

Every node has a ratio associated to it, which is the competitive ratio reached from this node, when both the algorithm and the adversary play optimally. Consider a node in the game tree, where the algorithm is in turn, and assume that the algorithm did not yet execute untested jobs.  The ratio of this node is completely described by some parameters $c,d,e,f$: $c$ is the number of tested short jobs, $d$ the number of tested long jobs which are pending (not yet executed), $f$ is the number of untested jobs, and $e$ is a value determining the algorithms cost in the following way. It is defined as the total rank over all tests plus $x$ times the total rank of all executions of long tested jobs.  If this node is a leaf in the game tree, the algorithm's cost is $e+xd(d+1)/2+pn(n+1)/2 $, where we added to $e$ the additional cost of the postponed long job executions, and a \emph{base cost} generated by all $n$ job executions.  We use the following notations for the ratio of a node with parameters $c,d,e,f$, under various restrictions on the players.

\begin{quote}
    \begin{tabular}{lll}
        notation & algorithm & adversary \\ \hline
        $\textrm{ratio}(c,d,e,f)$ & unrestricted & unrestricted \\
        $\textrm{ratio'}(c,d,e,f)$ & postponing & unrestricted \\
        $\textrm{ratio''}(c,d,e,f)$ & postponing & restricted \\
    \end{tabular}
\end{quote}

We can show the following useful inequality.

\begin{lemma}\label{lem:restricted}
	For any $c,d\geq 0$, $e>0$ and $f \geq 1$ we have 
	$\textrm{ratio'}(c,d,e,f) > \textrm{ratio'}(c,d+1,e,f-1)$.
\end{lemma}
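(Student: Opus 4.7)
I would argue by induction on $f$, using the recursion $\textrm{ratio'}(c,d,e,f) = \min(T, S)$, where $T$ is the value if the algorithm tests the next untested job (and the adversary then chooses short or long) and $S$ is the value if the algorithm switches to executing all $f$ remaining untested jobs (the adversary then choosing the number $k \in \{0,\ldots,f\}$ that turn out long). The goal is to show that both $T$ and $S$ strictly decrease when passing from $(c,d,e,f)$ to $(c,d+1,e,f-1)$; once this is done, $\min(T_1,S_1) > \min(T_2,S_2)$ follows from the elementary implication that $\min$ is strictly monotone whenever every component strictly dominates.

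For the base case $f = 1$, a direct enumeration gives $\textrm{ratio'}(c,d,e,1) = 1 + e/\textrm{OPT}(d)$ (the algorithm switches and the adversary plays short) and $\textrm{ratio'}(c,d+1,e,0) = 1 + e/\textrm{OPT}(d+1)$, writing $\textrm{OPT}(\ell) = (pn(n+1) + x\ell(\ell+1))/2$. The strict inequality then follows from $\textrm{OPT}(d) < \textrm{OPT}(d+1)$ together with $e > 0$.

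For the inductive step with $f \geq 2$, the test branch is immediate: applying the inductive hypothesis to each of the two children arising from a test action (with $r = d+f$) gives $T_1 > T_2$. The switch branch is more delicate. Writing $\textrm{ratio}_i(k)$ for the ratio when the algorithm switches at state $i$ and the adversary picks $k$ long jobs, expanding the ALG and OPT formulas yields the identity
\[
\textrm{ratio}_1(k+1) - \textrm{ratio}_2(k) \;=\; \frac{x(f - k - 1)}{\textrm{OPT}_1(k+1)}, \qquad \textrm{OPT}_1(k+1) = \textrm{OPT}_2(k),
\]
so $\textrm{ratio}_1(k+1) \ge \textrm{ratio}_2(k)$, strictly for $k < f - 1$. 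At the boundary $k = f - 1$ the coupling only gives equality, so I would instead use the auxiliary bound $\textrm{ratio}_2(f-1) = \textrm{ratio}_1(f) < \textrm{ratio}_1(0) \le S_1$, where the middle inequality is strict because $e > 0$ forces $e/\textrm{OPT}_1(0) > e/\textrm{OPT}_1(f)$. Every adversary choice in state $2$ is therefore dominated by some choice in state $1$, yielding $S_1 > S_2$ and closing the induction.

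The main obstacle is the boundary case $k = f-1$ of the switch-branch comparison, where the natural coupling $k \leftrightarrow k+1$ only yields equality; the hypothesis $e > 0$ is precisely what supplies the necessary strictness, via the auxiliary comparison with $\textrm{ratio}_1(0)$.
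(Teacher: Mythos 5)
Your computations check out and your argument takes a genuinely different route from the paper. The paper introduces an auxiliary value $\textrm{ratio''}(c,d,e,f)$ (same game, but the adversary is forced to make the last job long) and proves the chain $\textrm{ratio'}(c,d,e,f) > \textrm{ratio''}(c,d,e,f) = \textrm{ratio'}(c,d+1,e,f-1)$: the strict inequality comes from a single exchange argument at the last job (turning it from long to short decreases ALG by exactly $dx$ and OPT by at least $dx$, so the ratio strictly increases unless ALG $=$ OPT, which $e>0$ forbids), and the equality comes from reinterpreting the forced last long job as one more pending tested long job. Your induction on $f$, with the coupling $k\leftrightarrow k+1$ of adversary responses in the switch branch, replaces both steps. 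I verified $\textrm{OPT}_1(k+1)=\textrm{OPT}_2(k)$ and $\textrm{ALG}_1(k+1)-\textrm{ALG}_2(k)=x(f-k-1)$ under the long-jobs-first placement, and your treatment of the boundary case via $\textrm{ratio}_1(f)=1+e/\textrm{OPT}_1(f)<1+e/\textrm{OPT}_1(0)=\textrm{ratio}_1(0)$ is exactly where $e>0$ enters, playing the same role it plays in the paper's exchange argument. Your route is more computational but also more self-contained: it avoids the somewhat delicate claim that the restricted-adversary equilibrium coincides, after relabelling, with the equilibrium of the smaller game.

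One caveat. Your recursion $\textrm{ratio'}=\min(T,S)$, where $S$ is the value of committing to execute \emph{all} $f$ remaining jobs untested, presupposes the two-phase structure of Conjecture~\ref{conjecture}. As defined, $\textrm{ratio'}$ only restricts the algorithm to be postponing; after executing one untested job it may in principle still test later jobs, so the correct execute branch is ``execute one untested job and recurse,'' not ``switch.'' Restricting the minimizer can only raise the game value, so what you actually prove is the inequality for the two-phase-restricted value, which equals $\textrm{ratio'}$ only under a node-wise form of the conjecture. The paper's proof of this lemma does not need that assumption, and the lemma --- unlike Lemmas~\ref{lem:postponing} and~\ref{lem:suffix_Ep} --- is stated without the conjecture caveat. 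If you keep your route, either state the two-phase hypothesis explicitly or rework the execute branch as a single-step recursion.
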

\begin{proof}
First we observe $\textrm{ratio''}(c,d,e,f) \leq \textrm{ratio'}(c,d,e,f)$, which
holds simply because a restricted adversary has less choices than an
unrestricted adversary. More arguments are necessary to show that the inequality is strict.  
Hence consider the moment $t$ when the algorithm processes the last job (either by testing it or by executing it untested). No matter what the action of the algorithm is for this job, the job will be executed before all $d$ pending long jobs.  Then if the adversary makes this last job short instead of long, the cost of the algorithm decreases by exactly $dx$, while the cost of the adversary decreases by at least $xd$, in fact by $x(d+d')$, where $d'$ is the number of long jobs which have been executed by the algorithm before $t$. As a result, the competitive ratio is strictly increasing,  unless the algorithm and the adversary produce exactly the same schedules. This would mean that the algorithm tested no job, and executed no long job before time $t$, which contradicts the assumption $e>0$.

Next we observe the equality $\textrm{ratio''}(c,d,e,f)=\textrm{ratio'}(c,d+1,e,f-1)$. The point is that the last processed job $j$ executed in the equilibrium schedule corresponding to $\textrm{ratio''}(c,d,e,f)$ is long and followed immediately by the $d$ tested postponed long jobs. Hence conceptually one could consider this job as an additional tested long job, decreasing at the same time the number of untested jobs. Because the restricted adversary was only committed to job $j$, and not to the other $f-1$ jobs, the equilibrium schedule corresponding to $\textrm{ratio''}(c,d,e,f)$ is identical to the one corresponding to $\textrm{ratio'}(c,d+1,e,f-1)$. This concludes the proof.
\end{proof}

The previous lemma allows us to show that the optimal algorithm is postponing.

\begin{lemma}\label{lem:postponing}
	Without loss of generality
	the optimal algorithm is postponing, in other words ratio=ratio'.
	This holds for both the adaptive and the non-adaptive model and under Conjecture~\ref{conjecture}.	
\end{lemma}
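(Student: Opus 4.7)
The plan is to prove $\textrm{ratio} \le \textrm{ratio}'$ trivially (restricting the algorithm to postponing cannot improve its optimum) and then focus on the substantive direction $\textrm{ratio} \ge \textrm{ratio}'$. I would establish the latter by backward induction on the game tree, showing that at every node where the algorithm has three options (test the next untested job, execute an untested job, or execute a pending tested long job), the third option is never strictly better than the two \emph{postponing-legal} options.

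The induction is on the quantity $d+f$, the total number of actions still to be taken by the algorithm. At a leaf ($d = f = 0$) nothing is left to decide and $\textrm{ratio} = \textrm{ratio}'$ trivially. For the inductive step at state $(c, d, e, f)$, the inductive hypothesis gives $\textrm{ratio} = \textrm{ratio}'$ at every successor. The first two options transition to successor states with strictly smaller $d+f$, so by induction their values under $\textrm{ratio}$ coincide with the values under $\textrm{ratio}'$, and the minimum of the two is exactly $\textrm{ratio}'(c, d, e, f)$ by definition. Executing a tested long job now, on the other hand, sets its rank to $f+d$ (since $f$ future untested executions and $d-1$ other pending long executions all still lie ahead), sends the state to $(c, d-1, e + x(f+d), f)$, and yields by induction the value $\textrm{ratio}'(c, d-1, e + x(f+d), f)$. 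The inductive step therefore reduces to proving
\begin{equation*}
\textrm{ratio}'(c, d-1, e + x(f+d), f) \;\ge\; \textrm{ratio}'(c, d, e, f).
\end{equation*}

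This is where Lemma~\ref{lem:restricted} enters. Applied to the state $(c, d-1, e + x(f+d), f)$, it gives
$\textrm{ratio}'(c, d-1, e + x(f+d), f) > \textrm{ratio}'(c, d, e + x(f+d), f-1)$, where the required hypothesis $e + x(f+d) > 0$ is automatic because the third option can only be invoked when $d \ge 1$. I would then complete the chain with the auxiliary inequality
\begin{equation*}
\textrm{ratio}'(c, d, e + x(f+d), f-1) \;\ge\; \textrm{ratio}'(c, d, e, f),
\end{equation*}
verified by unfolding one step of the recursive definition of $\textrm{ratio}'(c, d, e, f)$: whichever of test or execute-untested the algorithm picks, the adversary's best short-or-long response inflates the accumulated cost $e$ by at most $x(f+d)$, since that is the weight $x$ times the largest possible rank of the next execution. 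This matches exactly the amount by which the $e$-coordinate has been pre-paid on the right-hand side, so the right-hand state dominates the left-hand one and the chain closes.

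The main obstacle will be this last auxiliary inequality between two neighboring states of $\textrm{ratio}'$; it is not a direct consequence of Lemma~\ref{lem:restricted} and demands a careful case analysis of the algorithm's first move (test versus execute-untested) together with the adversary's best response in each case. Dispatching the induction base and the degenerate case $f = 0$ (where the two sides of the auxiliary inequality reduce to the same algebraic expression) is routine in comparison.
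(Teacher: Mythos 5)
Your overall skeleton --- backward induction over the game tree, reducing the claim to showing that executing a pending tested long job (value $\textrm{ratio}'(c,d-1,e+x(f+d),f)$, the paper's $A$) is dominated, and invoking Lemma~\ref{lem:restricted} to compare it with the execute-untested-long successor $\textrm{ratio}'(c,d,e+x(f+d),f-1)$ (the paper's $B$) --- matches the paper's proof. But the auxiliary inequality you hinge the chain on, $\textrm{ratio}'(c,d,e+x(f+d),f-1)\ge\textrm{ratio}'(c,d,e,f)$, is exactly where the argument breaks, and your justification for it does not go through. Unfolding one step of $\textrm{ratio}'(c,d,e,f)$, the execute-untested action has value $\max\{B,C\}$ with $C=\textrm{ratio}'(c+1,d,e,f-1)$; when the adversary's best response to that execution is to make the job \emph{short}, i.e.\ $C>B$, the state $(c,d,e,f)$ can have value as large as $C$, which is strictly larger than $B$, and your chain $A>B\ge\textrm{ratio}'(c,d,e,f)$ collapses. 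Your heuristic that ``the adversary's best response inflates $e$ by at most $x(f+d)$'' ignores that the successor states differ not only in the $e$-coordinate but also in $c$ (for $Ep$ and $Tp$) and in $d$ (for $Tx$), so monotonicity in $e$ alone cannot establish that the $Ex$-successor dominates the others; moreover the test actions increase $e$ by $f+d$, which exceeds $x(f+d)$ whenever $x<1$.

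The telltale sign is that your proof never uses Conjecture~\ref{conjecture}, even though the lemma is stated (and proved in the paper) only conditionally on it. The two-phase assumption is needed precisely in the case $C>B$: once an untested job is executed, the conjecture forces all remaining untested jobs to be executed untested as well, which pins down the continuation, lets one characterize the adversary's best response (the long executed jobs placed first, and in the case at hand none of them long), and yields directly that executing the tested long job now gives a ratio strictly larger than $C$. Without that ingredient the missing comparison $A\ge C$ is out of reach, so the ``careful case analysis'' you defer to cannot be expected to succeed unconditionally.
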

\begin{proof}
We provide the proof only for the adaptive model, as it is the hardest.  First we observe that a tested short job $j$ can be safely executed by the algorithm right after its test.  This choice reduces the rank of the following actions happening between this moment and the eventual execution of job $j$, and has no influence on the adversary's strategies.  Hence the optimal algorithm executes tested short jobs right after their test.

For the tested long jobs, formally we show that for every node in the game tree with parameters $c,d,e,f$, the optimal algorithm will not execute a tested long job as the next action.  The proof is by induction on the pair $(f,d)$ in lexicographical order.   The base case $f=0$, holds trivially. Indeed the node is in fact a leaf in the game tree, and the algorithm has no option but to execute all $d$ tested long jobs.

For the induction step consider a node with parameters $c,d,e,f$ such that $f\geq 1$ and assume the induction hypothesis for all nodes with parameter $f-1$, or with parameter $f$ and $d-1$ in case $d\geq 1$.

We start with the easy case $d=0$. When there are no tested long jobs available, the algorithm has no choice but to test or to execute the next one among the $f$ untested jobs. This action results in a node with parameter $f-1$, and by induction hypothesis the algorithm is postponing.

It remains to show the induction step for the case $d\geq 1$.
The following table shows the ratios resulting by an action from the algorithm followed possibly by an action from the adversary.

\begin{center}
\begin{tabular}{lll}
short & action & ratio \\ 
\hline
$Tp$ & test a short job & $\textrm{ratio}(c+1,d,e+f+d,f-1)$
\\
$Tx$ & test a long job & $\textrm{ratio}(c,d+1,e+f+d,f-1)$\\
$Ep$ & execute an untested short job & $C=\textrm{ratio}(c+1,d,e,f-1)$ \\
$Ex$ & execute an untested long job & $B=\textrm{ratio}(c,d,e+x(f+d),f-1)$ \\
& execute a tested long job & $A=\textrm{ratio}(c,d-1,e+x(f+d),f)$
\end{tabular}
\end{center}

Ratios $A,B$ can be compared using the fact that by induction hypothesis for
nodes with parameters $f-1,d$ or $f,d-1$ the optimal algorithm is postponing,
which allows to use Lemma~\ref{lem:restricted}, with $d-1$ instead of $d$ to
match the expressions. It shows that ratio $A$ is larger than ratio
$B$.

But this is not enough, we need to show $A \geq \max\{B,C\}$.  This inequality would
certify that the optimal algorithm does not choose to execute a tested long
job now, which is the induction step to show.

When $B\leq C$ there is nothing to show. Hence  we assume $C>B$, in other
words the optimal adversary answers the E action (execute an untested job) by
a short job.  We claim that from now on all executions of untested jobs will
be short. Here we use the two-phase assumption.  This assumption states that
once an untested job is executed, all remaining untested jobs will be executed
untested as well.  As a result the subsequent schedule consists of $f$ job
executions followed by $d$ executions of the pending tested long jobs.  The
two-phase assumption really means that from now on the algorithm commits to
this execution pattern, hence the adversary can decide on a number $0\leq
b\leq f$ of jobs to be long. Which of the $f$ executed jobs will be long is of
no influence on the cost of the adversary, but placing them first maximizes
the cost of the algorithm.  This proves the claim.

Now if the algorithm knows that its remaining $f$ executions of untested jobs
will be short, then we claim that it prefers to execute an untested job,
rather than a tested long job.  The formal reason is that if the algorithm
decides for the later, then it will have a ratio strictly larger than $C$
already if the adversary still decides to make all untested jobs short, no
matter if they are tested or executed untested.

This concludes the induction step, and therefore the proof.
\end{proof}

We complete this section by describing what is happening at the last node in the game tree.

\begin{lemma}\label{lem:suffix_Ep}
	Consider a node in the game tree where the algorithm is in turn and a single untested job $j$ is left, that is a node with parameters $c,d,e$ and $f=1$.  Then the optimal algorithm executes job $j$ untested, following by all $d$ tested long jobs.  Moreover the adversary will make job $j$ short.  This holds for both the adaptive and the non-adaptive model and under Conjecture~\ref{conjecture}.	
\end{lemma}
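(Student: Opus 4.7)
My plan is to enumerate the possible actions at this node, compute the four resulting leaf ratios, and compare them.  By Lemma~\ref{lem:postponing} one may restrict attention to postponing algorithms, so the only two algorithm actions at this node are testing job~$j$ or executing it untested.  Let $P=pn(n+1)/2$, $X_k=xk(k+1)/2$, and let $\ell_0$ denote the number of long jobs among the $n-1$ jobs already processed; note $\ell_0\ge d$, since the $d$ currently pending tested long jobs are themselves counted in~$\ell_0$.  Applying the cost accounting of Algorithm~\ref{fig:CR} to the remaining action followed by the $d$ postponed long executions yields
\begin{align*}
\textrm{ratio}_{Tp} &= \frac{e+P+X_d+(d+1)}{P+X_{\ell_0}}, &
\textrm{ratio}_{Ep} &= \frac{e+P+X_d}{P+X_{\ell_0}}, \\
\textrm{ratio}_{Tx} &= \frac{e+P+X_{d+1}+(d+1)}{P+X_{\ell_0+1}}, &
\textrm{ratio}_{Ex} &= \frac{e+P+X_{d+1}}{P+X_{\ell_0+1}}.
\end{align*}

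To see that the algorithm strictly prefers $E$ to $T$, I would compare $Tp$ with $Ep$ and $Tx$ with $Ex$: within each pair the denominator agrees and the numerator differs only by the test cost $d+1>0$, so $\textrm{ratio}_{Tp}>\textrm{ratio}_{Ep}$ and $\textrm{ratio}_{Tx}>\textrm{ratio}_{Ex}$.  Whichever value the adversary picks, testing is strictly worse than executing untested; combined with Lemma~\ref{lem:postponing}, the suffix of the schedule is job~$j$ executed untested, followed by the $d$ postponed long jobs.

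Next I argue $\textrm{ratio}_{Ep}\ge\textrm{ratio}_{Ex}$, so the adversary answers with~$p$.  Write $\textrm{ratio}_{Ex}=(N+\Delta N)/(D+\Delta D)$ with $N=e+P+X_d$, $D=P+X_{\ell_0}$, $\Delta N=x(d+1)$, and $\Delta D=x(\ell_0+1)$.  The mediant inequality states $(N+\Delta N)/(D+\Delta D)\le N/D$ precisely when $\Delta N/\Delta D\le N/D$.  Here $\Delta N/\Delta D=(d+1)/(\ell_0+1)\le 1$ because $d\le\ell_0$, while $N/D=\textrm{ratio}_{Ep}\ge 1$ because the algorithm's cost never falls below OPT.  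This comparison holds in both models and under the conjecture, as it only uses the four leaf ratios at a single node.

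The main obstacle, such as it is, lies in the bookkeeping needed to produce the four ratios, in particular tracking how the adversary's last choice shifts $X_d$ to $X_{d+1}$ in the algorithm's cost and $X_{\ell_0}$ to $X_{\ell_0+1}$ in OPT, and justifying the inequality $d\le\ell_0$ (which holds because the $d$ pending tested long jobs are among the long jobs already revealed).  Once these expressions are in hand, both comparisons reduce to one-line algebraic observations.
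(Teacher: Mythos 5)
Your proof is correct and follows essentially the same route as the paper's: your $T$-versus-$E$ comparison is the paper's observation that a test adds its rank $d+1$ to the numerator while leaving the denominator unchanged, and your mediant-inequality step is exactly the paper's argument that switching $j$ from long to short removes $x(d+1)$ from $\textrm{ALG}$ but at least $x(\ell_0+1)\geq x(d+1)$ from $\textrm{OPT}$, combined with $\textrm{ALG}\geq\textrm{OPT}$. The only differences are presentational: you write out the four leaf ratios explicitly (and obtain a weak rather than strict preference for the short answer, which still suffices for the lemma's conclusion).
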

\begin{proof}
By the previous lemma, the optimal algorithm is postponing, meaning that it
executes all $d$ tested long jobs after job $j$. Hence testing $j$ would not
improve the schedule. Executing it untested decreases
strictly the objective value, while leaving the adversary with the same set of
possible actions. 

If the adversary makes $j$ short instead of long, then he decreases the cost
of the algorithm by $x(d+1)$ and decreases his cost by at least $x(d+1)$. 
Hence making $j$ short increases strictly the competitive ratio and is
therefore the choice of the optimal adversary. 
\end{proof}

\section{Non adaptive algorithms}
\label{sec:nonadaptive}

In this section we analyze algorithms in the non-adaptive model following a
two-phase strategy. The set of algorithmic strategies can be modeled as
$u\in\{ T^a E^{n-a} : 0\leq a\leq n\}$, and the set of adversarial strategies
modeled as $v\in\{x,p\}^n$. The resulting schedule is described by the string
$u_1v_1u_2v_2\ldots u_n v_n$.

In addition we show some structure of the adversarial best response to any algorithmic strategy.

\begin{lemma}
	The adversarial best response $v$ to any algorithmic strategy $u$ of the form\footnote{We use regular expressions to describe the form of a strategy and the form of a schedule.}
	$T^* E^*$ results in a schedule of the form $(Tx)^*(Tp)^*(Ex)^*(Ep)^*$.
\end{lemma}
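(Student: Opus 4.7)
Fix $u = T^a E^{n-a}$ and let $v \in \{p,x\}^n$ be any adversarial response. Write $\ell_T$ and $\ell_E$ for the number of long jobs among positions $1,\ldots,a$ and $a+1,\ldots,n$ respectively, set $s_T = a - \ell_T$, $s_E = (n-a) - \ell_E$, and $\ell = \ell_T + \ell_E$. Since $\mathrm{OPT} = (pn(n+1) + x\ell(\ell+1))/2$ depends on $v$ only through $\ell$, it is enough to exhibit, for each admissible pair $(\ell_T, \ell_E)$, a response of the form $(Tx)^{\ell_T}(Tp)^{s_T}(Ex)^{\ell_E}(Ep)^{s_E}$ whose $\mathrm{ALG}$ is at least that of any other response with the same counts. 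Letting the adversary further optimize over $(\ell_T, \ell_E)$ then produces a best response of the claimed global form.

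Reading Algorithm~\ref{fig:CR}, I decompose
\[
\mathrm{ALG} \;=\; \frac{pn(n+1)}{2} \;+\; \sum_{j=1}^{a} r_j \;+\; x\!\!\sum_{\substack{a<j\le n\\ v_j=x}}\! r_j \;+\; \frac{x\ell_T(\ell_T+1)}{2},
\]
where $r_j$ is the value of the variable $r$ at the start of the iteration handling job $j$. The first and last terms depend only on $n, p, x, \ell_T$, and are therefore invariant under any rearrangement preserving $(\ell_T,\ell_E)$; so the task reduces to maximizing the two middle sums independently.

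For the tested phase ($j \le a$), only $Tp$ actions decrement $r$, hence $r_j = n - |\{k<j : v_k = p\}|$, and each $Tp$ or $Tx$ contributes the same value $r_j$ to $\mathrm{ALG}$. Swapping an adjacent $Tp\,Tx$ pair into $Tx\,Tp$ leaves every rank outside the swapped pair unchanged and raises the rank at the second position by exactly one, so it increases $\sum_{j\le a} r_j$ by $1$; iterating these swaps sorts all $\ell_T$ long tests before all $s_T$ short tests. For the execution phase, $r$ decreases by exactly one per iteration irrespective of $v$, taking the fixed values $n-s_T, n-s_T-1, \ldots, \ell_T+1$; the ordering-dependent contribution is then $x$ times the sum of a fixed strictly decreasing sequence restricted to the $\ell_E$ long-job positions, which is maximized by selecting its $\ell_E$ largest entries, that is, by placing all long untested jobs at the beginning of the execution phase.

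The only delicacy is choosing the right invariants to fix during the rearrangement: one must preserve both $\ell$ (so that $\mathrm{OPT}$ stays the same) and $\ell_T$ separately (so that the postponed-cost term $x\ell_T(\ell_T+1)/2$ stays the same). The four-term decomposition above makes this transparent and reduces the rest of the argument to two elementary exchange arguments, one per phase.
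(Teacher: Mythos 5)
Your proof is correct and follows essentially the same route as the paper's: since $\mathrm{OPT}$ depends on $v$ only through the total number of long jobs, it suffices to maximize $\mathrm{ALG}$ at fixed counts, which both you and the paper do by within-phase exchange arguments (the paper computes the increments $+1$ for a $TpTx\to TxTp$ swap and $+x$ for an $EpEx\to ExEp$ swap directly). Your version merely makes the rank bookkeeping and the two preserved invariants $\ell$ and $\ell_T$ explicit; no substantive difference.
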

\begin{proof}
	We show the claim by an exchange argument. Let $v$ an adversarial strategy
	which is not of the claimed form. Then there is a position $0\leq i < n$ such
	that $u_iv_iu_{i+1}v_{i+1}\in\{TpTx,EpEx\}$.  We consider the effect of
	swapping $v_i$ with $v_{i+1}$ in the adversarial strategy. Since the number
	of long jobs is preserved, in order to analyze the change in the ratio, in
	fact we need to analyze the cost of the resulting schedule.  We observe that
	the cost is increased by the exchange, namely by $x$ in the case $EpEx$ and by
	1 in the case $TpTx$.
\end{proof}

The implication of this lemma is that there are only
$O(n^3)$ possible outcomes of the game. One can compute the ratios of these schedules in
amortized constant time by considering them in an appropriate order and
updating the costs of the algorithm's schedule and optimal schedule during the
loops. Hence computing the optimal non-adaptive strategy of the form $T^*E^*$
can be done in time $O(n^3)$. We show how this complexity can be reduced.

\begin{theorem}
	For the non-adaptive model, the equilibrium schedule (and therefore the optimal algorithm's strategy) can be computed in 
	time $O(n^2)$.
\end{theorem}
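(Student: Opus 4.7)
The plan is to leverage the preceding lemma, which constrains the adversary's best response to any strategy $u = T^a E^{n-a}$ to be a schedule of the form $(Tx)^{a_1}(Tp)^{a_2}(Ex)^{a_3}(Ep)^{a_4}$ with $a_1+a_2=a$ and $a_3+a_4=n-a$. Brute-force enumeration over the $(a+1)(n-a+1)$ responses per algorithmic strategy, across the $n+1$ values of $a$, already yields the $\Theta(n^3)$ bound noted above; my task is to trim one factor of $n$.

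The key idea is to reparameterize the adversary's choice by the pair $(\ell, a_1)$, where $\ell := a_1 + a_3$ is the total number of long jobs. The point is that OPT depends only on $\ell$ (and on $n,p,x$), while ALG depends on the full split, so for fixed $a$ and $\ell$ maximizing the ratio over the adversary's choices reduces to maximizing ALG over $a_1$ in the feasible interval $[\max(0,\ell-(n-a)),\min(a,\ell)]$. Using the cost accounting from Algorithm~\ref{fig:CR}, substituting $a_2 = a - a_1$ and $a_3 = \ell - a_1$, and collecting the quadratic contributions from the four blocks of the schedule (test ranks, untested-long executions, postponed-long executions, and the $-a_2(a_2-1)/2$ rank correction of the short tests), I would verify that ALG is a strictly concave quadratic in $a_1$ whose $a_1^2$ coefficient evaluates to $-(x+\tfrac{1}{2})$.

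Strict concavity yields the unrestricted maximizer $a_1^\star$ in closed form by setting the derivative to zero, and the integer maximizer over the feasible interval is then one of $\lfloor a_1^\star\rfloor$, $\lceil a_1^\star\rceil$, or either endpoint -- selectable in $O(1)$ time. The final algorithm enumerates the $O(n^2)$ pairs $(a,\ell)\in\{0,\dots,n\}^2$, computes the best $a_1$ and the resulting ratio in $O(1)$ per pair, takes the maximum over $\ell$ for each $a$ (the adversary's best response), and returns the $a$ minimizing this maximum. The total running time is $O(n^2)$, and both the optimal algorithmic strategy and the equilibrium schedule are recovered.

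The sole nontrivial step is the concavity computation: the four quadratic contributions must be gathered without algebraic slip, and a small amount of care is needed to translate $a_3 = \ell - a_1 \in [0,n-a]$ into the correct feasibility interval for $a_1$. Once the leading coefficient $-(x+\tfrac{1}{2}) < 0$ is in hand, the rest -- deriving $a_1^\star$, checking the four candidate integers, and bounding the overall complexity -- is mechanical.
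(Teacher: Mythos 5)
Your proposal is correct and follows essentially the same route as the paper: fix the number of tests $a$ and the number of long jobs $\ell$ (the paper's $f$), observe that OPT depends only on $\ell$, and exploit concavity of ALG as a quadratic in the number of tested long jobs to find the adversary's best split in $O(1)$ per pair, giving $O(n^2)$ overall. One substantive remark: your leading coefficient $-(x+\tfrac12)$ disagrees with the paper's, whose $\textrm{ALG}(d)$ has second derivative $-2x$; checking small instances directly (e.g.\ $n=4$, $a=f=2$, $d\in\{0,1,2\}$) gives second difference $-(2x+1)$, so your value is the correct one --- the paper's expression carries a spurious term $d(d+1)/2$ labeled ``cost of part 1'' even though that part consists only of tests of postponed jobs and contributes no internal completion time. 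This does not change the structure or the $O(n^2)$ bound, since both coefficients are negative, but it does shift the closed form for the adversary's optimal $d^*$.
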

\begin{proof}
	We observed earlier that all equilibrium schedules consist of four parts and are of the form
	$(Tx)^{d}(Tp)^{a-b} (Ex)^{f-b} (Ep)^{n-a+d-f}$ for some parameters $a,d,f$.  The parameter
	$a$ describes the number of tests done by the algorithm, while $c$ describes
	the total number of long jobs, and $d$ the number of tested long jobs, as
	decided by the adversary.  We fix the parameters $a,f$ and show that the
	optimal parameter $d$ for the adversary can be computed in constant time. As
	there are only $O(n^2)$ possible values for $a, f$ this would prove the
	theorem.

	The parameter $f$ defines the total number of long jobs, hence the optimal schedule, which is $(Ep)^{n-f}(Ex)^{f}$, does not depend on $d$.  As a result, the adversary chooses $d$ such that the cost of the algorithm is maximum.
	The cost of the algorithm can be expressed as follows
	\begin{align*}
		\textrm{ALG(d)} =&\phantom{+} 
			d (d + 1)/2 \tag{cost of part 1}			\\ &
		  +   d n \tag{delay caused by part 1}		  \\ &
		  + (1 + p) (a - d) (a - d + 1)/2 \tag{cost of part 2} \\ &
		  + (1 + p) (a - d) (n - a + d) \tag{delay caused by part 2}\\ &
		  + (p + x) (f - d) (f - d + 1)/2 \tag{cost of part 3}\\ &
		  + (p + x) (f - d) (n - a + 2 d - f)\tag{delay caused by part 3}\\ &
		  + p (n - a + d - f) (n - a + d - f + 1)/2 \tag{cost of part 4}\\ &
		  +   p (n - a - f + d) d \tag{delay caused by part 4}\\ &
		  + (p + x) d (d + 1)/2. \tag{cost of delayed long jobs}
	\end{align*}
	This function is quadratic in $d$ with a negative second derivative, namely
	$-2x$. Therefore the integer maximizer of ALG can be found, by first finding
	the root of the function $d \mapsto ALG(d) - ALG(d-1)$ and  then rounding it
	down.  This function is $a + (a - 2 d + 2 f - n + 1)  x$, and its root is
	\[
		d^* = f -\frac{n-a-a/x -1}{2},
	\]
	which never exceeds $f$. The adversary has to choose $d$, such that $0\leq
	d\leq \min\{a,f\}$, hence he chooses
	\[
		\min\{a, \max\{0, \lfloor d^* \rfloor\}\}.
	\]
	Since this integer can be computed in constant time, this concludes the proof.
	The optimal algorithm is summarized in Figure~\ref{fig:algo-opt-non-adaptive}.
\end{proof}

\begin{algorithm}
\begin{algorithmic}[0]
	\State{best $= +\infty$ } 
	\State $a^* =$ None  
	\ForAll{$0\leq a, f\leq n$}
		\State{OPT$ = pn(n+1)/2 + x f(f+1)/2$}
		\State{$d^* = f - (n-a-a/x-1)/2$}
		\State{$d = \min\{a, \max\{0, \lfloor d^* \rfloor\}\}$}
		\State{Ratio = ALG(d)/OPT }
		\If{Ratio $<$ BestRatio} \Comment{keep best ratio}
		\State{BestRatio = Ratio}
		\State{$a^* = a$ }
		\EndIf
	\EndFor
	\State \Return{``The optimal strategy tests the first $a^*$ jobs.''} 
\end{algorithmic}

\caption{Pseudocode computing the optimal strategy for the algorithm in the non-adaptive model.}
\label{fig:algo-opt-non-adaptive}
\end{algorithm}

We conclude this section, by providing a closed expression for the asymptotic
competitive ratio. The computations have been conducted with the help of the
computing system Mathematica. The files are available to the reader in the
companion webpage
\url{https://www.lip6.fr/Christoph.Durr/Scheduling-w-Oracle/}.

Formally we analyze the limit of the competitive ratio when
$n$ tends to infinity. This is possible, because for large values of $n$, we
can focus on the dominant parts in the schedule costs, and ignore integrality
of the parameters.  In the sequel instead of working with integral parameters
$a,d,f$ we work with fractions.  Note that along the way we also made a
variable change, in the sense $a=\alpha n, d=\beta n, f=(\beta + \gamma)n$.

The intuition behind the following theorem is that for large instances, if $x$ is not
larger than $2+1/p$ then the difference of short and large jobs is so small
that the length of a test is too costly compared to the gain obtained from the
returned information.  

\begin{theorem}
  The asymptotic competitive ratio of the scheduling problem for non-adaptive algorithms is
  \[ \sqrt{1+\frac{x}{p}} \]
  when  $0\leq x < 2 + 1 / p$ and
  \[ 1 + \frac{x^2 - p x - 1 + \sqrt{\Delta'}}{2 p
  x^2} \]
  when $x \geq 2 + 1 / p$,
  for $\Delta'=8 p (x - 1) x^2 + (1 + p x - x^2)^2$.
\end{theorem}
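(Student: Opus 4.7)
The plan is to pass to the continuous limit and carry out a min--max optimization built on the quadratic structure of $\mathrm{ALG}(d)$ already exposed in the previous theorem. I set $a = \alpha n$, $d = \beta n$, $f = \phi n$ and retain only the $\Theta(n^{2})$ leading terms; both $\mathrm{ALG}$ and $\mathrm{OPT}$ then become quadratic polynomials in $(\alpha, \beta, \phi)$, with $\mathrm{OPT}/n^{2} \to \tfrac{1}{2}(p + x\phi^{2})$. The asymptotic competitive ratio is the value of $\min_{\alpha \in [0,1]} \max_{\beta, \phi} R(\alpha, \beta, \phi)$, where $R$ is the limit of $\mathrm{ALG}/\mathrm{OPT}$, subject to the natural feasibility constraints $0 \leq \beta \leq \min(\alpha, \phi)$ and $\phi - \beta \leq 1 - \alpha$.

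For fixed $\alpha$ and $\phi$, $R$ is concave in $\beta$, and its interior maximizer is the continuous counterpart of the previous theorem's $d^{*}$, namely $\beta^{*}(\alpha, \phi) = \phi - \tfrac{1}{2}(1 - \alpha - \alpha/x)$, clipped to the feasible interval when necessary. Substituting $\beta^{*}$ reduces the adversarial problem to maximizing a rational function of $\phi$ alone: setting its derivative to zero produces a quadratic in $\phi$ whose positive root yields $\phi^{*}(\alpha)$ in closed form, and thus a one-variable function $R^{*}(\alpha) := R(\alpha, \beta^{*}(\alpha, \phi^{*}(\alpha)), \phi^{*}(\alpha))$ to be minimized over $\alpha$.

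The case split arises from whether $R^{*}$ attains its minimum at the boundary $\alpha = 0$ or at an interior stationary point. At $\alpha = 0$ no job can be tested, so $\beta = 0$ is forced; maximizing $R(0, 0, \phi)$ directly gives the stationarity condition $x\phi^{2} + 2p\phi - p = 0$, and using the factorisation $(\sqrt{p(p+x)} - p)(\sqrt{p(p+x)} + p) = px$ the resulting maximum collapses to the clean value $\sqrt{1 + x/p}$. Computing $dR^{*}/d\alpha$ at $\alpha = 0$ and locating where it changes sign gives the threshold $x = 2 + 1/p$: below it the boundary is optimal and the ratio is $\sqrt{1 + x/p}$; above it a strict interior stationary point $\alpha^{*} > 0$ exists, and solving the coupled stationarity system in $(\alpha, \phi)$ yields, after elimination of one variable, a quadratic whose positive root produces the claimed expression with discriminant $\Delta' = 8p(x-1)x^{2} + (1 + px - x^{2})^{2}$.

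The main obstacle is the coupled nature of the stationarity conditions for $\alpha$ and $\phi$ in the interior case: neither unknown can be solved for the other cleanly in elementary form, and reducing the resulting polynomial system to a single univariate quadratic requires substantial symbolic manipulation, which I would handle with the computer algebra system Mathematica as announced on the companion webpage. A secondary subtlety is verifying, for each range of $x$, that the interior formula for $\beta^{*}$ is the one active at equilibrium rather than one of the boundary constraints $\beta \in \{0, \alpha, \phi\}$, and matching the two branches of the final formula continuously at the threshold $x = 2 + 1/p$ to confirm that both case expressions agree there.
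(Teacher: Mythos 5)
Your outline follows the same route as the paper: pass to the continuous limit keeping only the $\Theta(n^2)$ terms, let the adversary best-respond in the two long-job fractions, then minimize over the tested fraction $\alpha$, with the case split at $x=2+1/p$ separating the boundary solution $\alpha=0$ (giving $\sqrt{1+x/p}$, via exactly the stationarity condition $x\phi^2+2p\phi-p=0$ you state) from the interior solution (giving the $\Delta'$ expression). Your inner step --- maximizing over $\beta$ with the total long fraction $\phi$ held fixed, so that $\mathrm{OPT}$ is constant and the previous theorem's $d^*$ carries over verbatim --- is a perfectly valid and arguably cleaner slicing than the paper's, which works in $(\beta,\gamma)=(\beta,\phi-\beta)$ coordinates. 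The one substantive divergence is that you optimize the rational function $R=\mathrm{ALG}/\mathrm{OPT}$ directly, which is precisely why you end up facing a ``coupled stationarity system'' with nested radicals in the interior case. The paper sidesteps this by introducing an auxiliary variable $r$ (the target ratio minus one) and working with the polynomial $G=(1+r)\cdot 2\,\mathrm{OPT}-2\,\mathrm{ALG}$: the ratio is at most $1+r$ exactly when the algorithm can force $G\geq 0$, the successive best responses $\beta^*,\gamma^*,\alpha^*$ are then rational in $(\alpha,r)$ with no radicals, and after substitution the numerator of $G$ is a quadratic in $r$ whose larger root $r_2^*$ gives the competitive ratio $1+r_2^*$, with $\Delta'$ appearing as its discriminant. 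So the square root in the answer comes from a single quadratic in $r$, not from eliminating $\phi^*(\alpha)$, and no coupled system ever needs to be solved. The paper also obtains the threshold differently --- $x=2+1/p$ is exactly where the interior $\beta^*=r_2^*p-1$ becomes nonnegative, rather than where $dR^*/d\alpha$ changes sign at $\alpha=0$ --- and it carries out the feasibility checks ($r\geq 0$, $0\leq\beta^*\leq\alpha^*$, $0\leq\gamma^*\leq 1-\alpha^*$) that you correctly flag as a remaining obligation. Your plan would work, but without the $r$-linearization the symbolic elimination you defer to Mathematica is considerably heavier than what the paper actually performs.
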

\begin{proof}
Given $n$, the number of jobs, the algorithm decides on some fraction $0
\leq \alpha \leq 1$, and will test the first $\alpha n$ jobs,
followed by the untested execution of the remaining $(1 - \alpha) n$ jobs. The
adversary decides on some fraction $0 \leq \beta \leq \alpha$ of the
tested jobs to be long, and some fraction $0 \leq \gamma \leq 1 -
\alpha$ of the executed jobs to be long. Both the cost of the algorithm and of
the optimal schedule are quadratic expressions in $n$. For the sake of
simplicity we focus only on the quadratic part, which is dominating for large
$n$.  Hence the results of this section apply only asymptotically when $n$
tends to infinity.  However in principle it is possible to make a similar but
tedious analysis also for the linear part and obtain results that hold for all
job numbers $n$.

\begin{figure}[!ht]
  \centerline{\includegraphics[width=14cm]{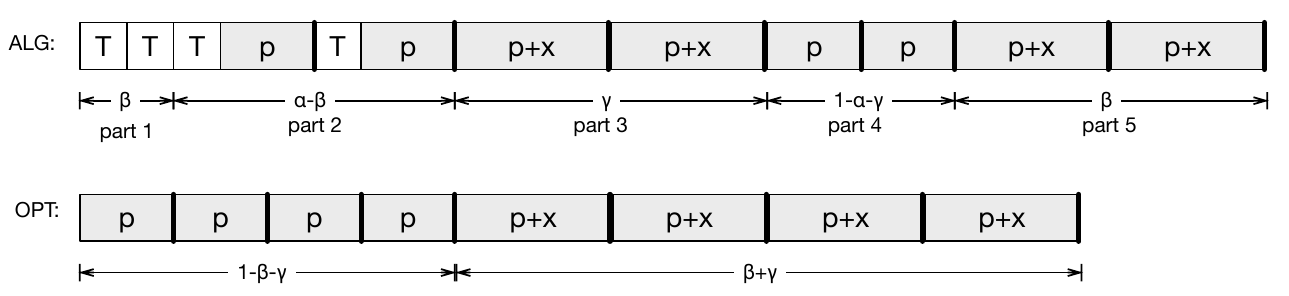}}
  \caption{Schedule obtained in the non-adaptive setting.\label{fig:TE_abc}}
\end{figure}

The game played between the algorithm and the adversary contains the following strategies. The
algorithm chooses some parameter $\alpha$ while the adversary chooses some
parameters $\beta, \gamma$. The $n^2$ dependent part of the cost of the
algorithm's schedule is (we multiply by 2 to simplify notation)
\begin{eqnarray}
  2 \cdot \textrm{ALG} & = & \phantom{+} 2 \beta  \label{a1}\\  
  &  & + (1 + p) (\alpha - \beta)^2  \label{a2}\\
  &  & + 2 (1 + p) (\alpha - \beta) (1 - \alpha + \beta)  \label{a3}\\
  &  & + (p + x) \gamma^2  \label{a4}\\
  &  & + 2 (p + x) \gamma (1 - \alpha + \beta - \gamma)  \label{a5}\\
  &  & + p (1 - \alpha - \gamma)^2  \label{a6}\\
  &  & + 2 p (1 - \alpha - \gamma) \beta  \label{a7}\\
  &  & + (p + x) \beta^2 .  \label{a8}
\end{eqnarray}
We justify the cost, by distinguishing the contribution of each of the parts
of the schedule to its cost, referring to the 5 parts of the schedule
illustrated in Figure \ref{fig:TE_abc}. Expression {\eqref{a1}} is the delay
caused by part 1 on the following parts of the schedule, {\eqref{a2}} is the
cost of part 2, {\eqref{a3}} is the delay caused by part 2 on the rest,
{\eqref{a4}} is the cost of part 3, {\eqref{a5}} is the delay of part 3 on the
rest, {\eqref{a6}} is the cost of part 4, {\eqref{a7}} is the delay of part 4
on part 5, {\eqref{a8}} is the cost of part 5.

The $n^2$ dependent part of the optimal schedule (again multiplying by $2$ to
simplify notations) is
\[
  2\cdot\textrm{OPT}  =  \phantom{+} p  + x(\beta + \gamma)^2.
\]
The algorithm wants to minimize the ratio $\textrm{ALG} / \textrm{OPT}$ while the
adversary wants to maximize it. In order to avoid the manipulation of
fractions, we introduce another parameter $r$ chosen by the algorithm. It
defines the expression
\begin{align*}
 G &= (1 + r) 2 \cdot \textrm{OPT} - 2 \cdot \textrm{ALG}
\\	
 &= p r + \alpha^2 + \beta^2 - 2 \alpha (1 + \beta) + 2 x \gamma (\alpha +
   \gamma - 1) + r x (\beta + \gamma)^2 . 
\end{align*}
and in this setting, $G$ is negative if and only if the ratio is strictly
larger than $1 + r$.   Now we have reformulated the game. The algorithm wants
to have $G$ equal to zero and can choose $\alpha$ and $r$ for this purpose. 
In fact we can think of the algorithm wanting to maximize $G$, and choosing
the smallest $r$ such this goal is achieved for $G$ being $0$. In contrast the
adversary wants to have $G$ negative, and we can think of the adversary
wanting to minimize $G$.  For this purpose we can choose $\beta, \gamma$.

We analyze this game by fixing $\alpha$ and $r$ and understanding the best
response of the adversary. Since $G$ is a quadratic expression in $\beta$ and
$\gamma$ this best response can be either the minimizer of $G$, i.e.\ the
value at which sets the derivative of $G$ to zero, or one of the boundaries of
the domains of $\beta$ and $\gamma$.  Once we fixed the best response of the
adversary we can analyze the best response of the algorithm to the choice of
the adversary.  This leads to a simple but tedious case analysis which we
present now.  We start by breaking the analysis into two cases depending on
how $\beta$ compares to $2+1/p$, because this decides if the extreme point of
$G$ in $\beta$ is a local minimum or maximum.

\subsection{Case $x \geq 2 + 1 / p$}

For fixed $\alpha, r$ what would be the best response by the adversary? Note
that $G$ is convex both in $\beta$ and $\gamma$. Hence one possibility for the
adversary would be to choose the extreme points, provided that they satisfy
the required bounds $0 \leq \beta^* \leq \alpha$ and $0
\leq \gamma^* \leq 1 - \alpha$. The extreme point for $\beta$
is
\begin{eqnarray*}
  \beta^* & = & \frac{\alpha - r x \gamma}{1 + r x} .
\end{eqnarray*}
Choosing $\beta = \beta^*$ we observe that $G$ remains convex in
$\gamma$, since the second derivative of $G$ in $\gamma$ is
\[ 2 x \left( 2 + \frac{r}{1 + r x} \right) . \]

Hence we consider the extreme point for $\gamma$ which is
\begin{eqnarray*}
  \gamma^* & = & \frac{(1 + r x) (1 - \alpha) - r \alpha}{2 + r + 2 r x}
  .
\end{eqnarray*}
Choosing $\gamma = \gamma^*$ the expression $G$ writes as

\[ G = pr+\frac{- 2 \alpha (r + 2) - (1 - \alpha)^2 r x^2 + x
   (\alpha (\alpha + 2 r - 2) + 1)}{2 + r + 2 r x} 
   . \]
We observe that $G$ is concave in $\alpha$ hence one possibility for the
algorithm is to choose the extreme point for $\alpha$, which is
\[ \alpha^* = \frac{x + r x^2 - r x - r - 2}{x + r x^2} . \]
Choosing $\alpha = \alpha^*$ the expression $G$ writes as
\begin{align*}
 G &= \frac{2 + r + (p r - 2) x + r (p r - 1) x^2}{x + r x^2} \\
 &= \frac{2-2x + r(1+px-x^2) + r^2 px^2}{x + rx^2}.
\end{align*}
The algorithm chooses $r$ such that $G$ becomes zero. For this purpose we
consider the roots of the numerator of $G$, which are
\begin{eqnarray*}
  r_1^* & = & \frac{x^2 - p x - 1 - \sqrt{\Delta'}}{2 p x^2}\\
  r_2^* & = & \frac{x^2 - p x - 1 + \sqrt{\Delta'}}{2 p x^2},
\end{eqnarray*}
for
\begin{equation}
	\Delta'=8 p (x - 1) x^2 + (1 + p x - x^2)^2. 
  \label{eq:delta_prime}
\end{equation}

By case assumption $x\geq 2 + 1/p$, both roots of $G$ are real.
Since $G \geq 0$ means that the algorithm has ratio at most $1 + r$, and since
the numerator of $G$ is concave in $r$ the algorithm has to choose the larger
of both roots. Both roots are real, since $x\geq 1$ implies $\Delta'\geq 0$.
We choose the root $r^*_2$ which defines the ratio of the game provided the
following conditions are satisfied:
\[ r \geq 0, \: 0 \leq \beta^* \leq a^*, \: 0
   \leq \gamma^* \leq 1 - \alpha^* . \]

\paragraph{Condition $r \geq 0$}

The expression $r^*_2$ has a single root in $x$ namely at $x = 1$.
Moreover its limit when $x$ tends to infinity is $1 / p$. Hence $r \geq
0$ holds by case assumption $x \geq 2 + 1 / p > 1$.

\paragraph{Condition $\beta \geq 0$}

With the choices for $\alpha, \gamma$ and $r$ the value $\beta^*$ writes
as
\[ \beta^* = \frac{- x^2 - p x - 1 + \sqrt{\Delta'}}{2 x^2} = r_2^* p - 1. \]
This means that the condition $\beta \geq 0$ translates into $r_2^*
\geq 1 / p$. We observe that $r_2^* = 1 / p$ has a single solution
in $x$, namely $x = 2 + 1 / p$. Since at $x = 1$ the value of $r_2^*$ is
smaller than $1 / p$, we know that for $x \geq 2 + 1 / p$ we have
$r^*_2 \geq 1 / p$ and hence $\beta \geq 0$.

\paragraph{Condition $\beta \leq \alpha$}

We study the difference which is
\[ \alpha^* - \beta^* = \frac{1 + (- 4 + p) x + 3 x^2 - \sqrt{\Delta'}}{x^2 - 2 x} . \]
There is no risk of dividing by $0$ in the range $x \geq 2 + 1 / p$. We
observe that the numerator has two roots in $x$, namely $x = 0$ and $x = 1$.
Moreover the numerator evaluates as $(2 + 4 p) / p^2$ at $x = 2 + 1 / p$. We
conclude that $\beta^* \leq \alpha^*$ for the range $x
\geq 2 + 1 / p$.

\paragraph{Condition $0 \leq \gamma$}

The expression $\gamma$ writes as
\[ \frac{- (x - 1)^2 - p x + \sqrt{\Delta'}}{x^3
   - 2 x^2}, \]
again there is no risk of dividing by $0$ in the range $x \geq 2 + 1 /
p$. The numerator evaluates to zero at $x = 0, x = 1$ and at $x = 1 / (1 +
p)$. Since $\gamma^*$ has the value $p^2 / (1 + 3 p + 2 p^2) > 0$ at $x =
2 + 1 / p$ we obtain that $\gamma^* \geq 0$ in the range $x
\geq 2 + 1 / p$.

\paragraph{Condition $\gamma \leq 1 - \alpha$}

We observe that the expression $1 - \alpha - \gamma$ simplifies as $1 / x$.

This concludes the verification of the conditions on $r, \alpha, \beta,
\gamma$ and shows that for $x \geq 2 + 1 / p$ the competitive ratio is
$r^*_2$.

\subsection{Case $x < 2 + 1 / p$}

We have observed earlier that the extreme point $\beta^*$ is negative
when $x < 2 + 1 / p$. Hence for this range of $x$, the adversary chooses
$\beta = 0$. This means that all tested jobs will be short and the algorithm
has no incentive to test jobs, and will just execute them untested. We justify
this intuition by a formal analysis. For the choice $\beta = 0$, the
expression $G$ reads as
\[ G = p r - 2 \alpha + \alpha^2 + x \gamma (r \gamma - 2 (1 - \alpha -
   \gamma)), \]
which is convex in $\gamma$. Hence the adversary will choose the extreme point
in $\gamma$ which is
\[ \gamma^* = \frac{1 - \alpha}{2 + r} . \]
Note that this value satisfies $0 \leq \gamma^* \leq 1 -
\alpha$ as required. With this choice of $\gamma$ the goal function becomes
\[
	pr(2+r) -x(1- \alpha)^2 - (2+r)(2- \alpha)\alpha
\]
where we multiplied the goal function by $2+r$, preserving its sign and
simplifying the expression.  The second derivative in $\alpha$ is
\[
	4+2r-2x,
\]
which could be positive or negative. 

We start with the non positive case, meaning $2+r\leq x$.  In this case the algorithm chooses the extreme value, which is $\alpha=1$. In other words the algorithm tests all jobs, which all happen to be short and the competitive becomes $(1+p)/p=1+1/p$.  But $2+r=3+1/p$, which contradicts the case assumptions $2+r\leq x<2+1/p$. 

Finally we continue with the positive case.
Here the algorithm chooses the lower bound for
$\alpha$, which is $0$, translating the above mentioned intuition that the
algorithm will execute all jobs untested.

For $\alpha = 0$, the expression $G$ simplifies as $p r - x / (2 + r)$ which
is set to $0$ by $$r = \sqrt{(p + x) / p} - 1.$$

This concludes the case analysis, and therefore the proof of the theorem.
\end{proof}

\section{Adaptive model}
\label{sec:adaptive}

In this section we analyze two-phase algorithms in the adaptive model, as
described in the introduction. This means that the algorithm first tests some
jobs, then executes untested the remaining jobs.

It is convenient to illustrate the interaction between such an algorithm and the adversary
by a walk on a grid as follows. The vertices of the grid consist of all points
with coordinates $(c,d)$ such that $0\leq c, d$ and $c+d\leq n$.  Cells $(c,d)$
with $c+d=n$ are called \emph{final} cells, and non final cells $(c,d)$ are
connected to the cells $(c+1,d)$ and $(c,d+1)$. These arcs form a directed
acyclic graph with root $(0,0)$, which is the upper left corner of the grid,
see Figure~\ref{fig:grid}.

The walk starts at the root, and follows only down or right steps to adjacent
cells.  If the algorithm decides to test a job, the adversary can choose to
respond with a long job, resulting in a right step, or with a short job,
resulting in a down step.  In this sense the adversarial strategy played
against a fictive algorithm testing all jobs, translates into a path $P$
connecting the root to some final cell.  

If at some moment the current position is $(c,d)$, then it means that the algorithm
tested $c+d$ jobs, among which $d$ were long and $c$ were short.  These two
integers are not enough to fully describe the cost of the schedule produced by
the algorithm. Therefore we introduce an additional value, called the 
\emph{test cost}. Given a path $P$ from the root to a cell $(c,d)$, the test
cost associated to this cell is defined as  the value $e=\sum_{(c',d')} (n-c')$,
where the sum is taken over all grid cells $(c',d')$ visited by the walk $P$
between the root and the cell $(c,d)$ but excluding the final cell $(c,d)$.
This test cost $e$ represents the increase in cost due to the different tests
done so far,  since each test delays by one all $n-c'$ subsequent job
executions, i.e.\ its rank is $n-c'$.   We associate to the cell $(c,d)$ of
the path $P$ a so called \emph{stop ratio}   
\[   
R(c,d,e) =
\max_{0\leq b < n-c-d} \frac{\textrm{ALG}(b,c,d,e)}{\textrm{OPT}(b,d)},   
\]
for  
\begin{multline*}  \textrm{ALG}(b,c,d,e)   = p\frac{n(n+1)}2+ e \\
+ x\frac{   (n-c)(n-c+1) - (n-c-b)(n-c-b+1) + d(d+1)}2 
\end{multline*}  
\[
\textrm{OPT}(b,d)  =  p\frac{n(n+1)}2 + x\frac{(b+d)(b+d+1)}2.  
\]

Note that the stop ratio $R(c,d,e)$ is associated to the cell $(c,d)$ but
depends also on the fixed path $P$, since it determines the value $e$.

This is the ratio reached by the game if the algorithm decides to switch to
the execution phase after $c+d$ tests.  Here $b$ represents the number of long
executed jobs (i.e. $Ex$) in the execution phase and ALG, OPT are the costs of
the respective schedules, see Figure~\ref{fig:algo_n2_cost} for illustration.

\begin{figure}[htb]
  \centerline{\includegraphics[width=10cm]{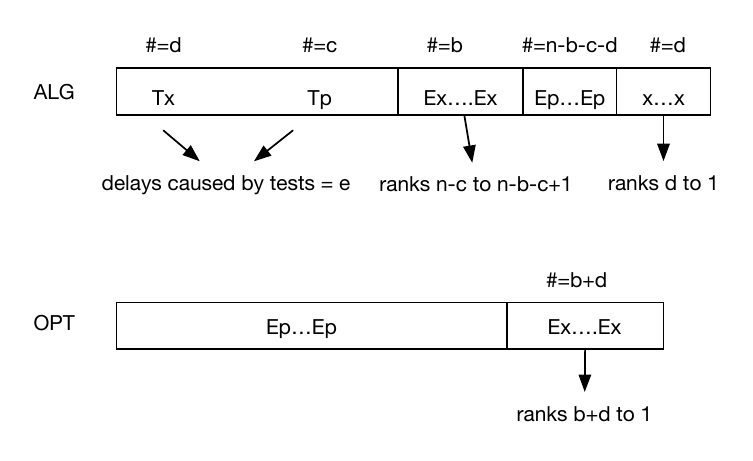}}
  \caption{Illustration of the cost expression.}
  \label{fig:algo_n2_cost}
\end{figure}

We explain now, how $R(c,d,e)$ can be computed efficiently.
\begin{lemma} 	\label{lem:R}
	The stop ratio $R(c,d,e)$ can be computed in constant time.
\end{lemma}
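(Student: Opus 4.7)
The plan is to exploit the fact that, with $c$, $d$, $e$ fixed, both $\textrm{ALG}(b,c,d,e)$ and $\textrm{OPT}(b,d)$ are quadratic polynomials in $b$. Expanding $(n-c-b)(n-c-b+1)$ yields
\[
\textrm{ALG}(b,c,d,e) = p\tfrac{n(n+1)}{2} + e + \tfrac{x}{2}\bigl((2(n-c)+1)b - b^2 + d(d+1)\bigr),
\]
which is quadratic in $b$ with leading coefficient $-x/2$. A similar expansion shows that $\textrm{OPT}(b,d)$ is quadratic in $b$ with leading coefficient $+x/2$. In particular the coefficients of both polynomials can be evaluated in constant time.

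Writing $f(b) = \textrm{ALG}(b,c,d,e)$ and $g(b) = \textrm{OPT}(b,d)$, I would next study the real-valued function $b \mapsto f(b)/g(b)$ through its derivative $(f'g - fg')/g^2$. Because $f$ and $g$ are both quadratic with leading coefficients $\mp x/2$, a direct expansion shows that the cubic terms in $f'g - fg'$ cancel, leaving a polynomial of degree at most two in $b$ whose coefficients are explicit expressions in $c, d, e, n, p, x$. The quadratic formula then produces its at most two real roots in constant time, and these are the only interior critical points of the continuous ratio.

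Since the maximisation in the definition of $R(c,d,e)$ is over the integer set $\{0, 1, \dots, n-c-d-1\}$, the integer maximiser lies in the constant-size set formed by the two endpoints together with the floor and ceiling of each real critical point. Evaluating $f(b)/g(b)$ at each of these $O(1)$ candidates and retaining the largest value yields $R(c,d,e)$ in constant time. There is no substantial obstacle here: the only mildly delicate point is verifying the cancellation of the cubic terms in $f'g - fg'$, which is immediate from the opposing leading coefficients of $f$ and $g$.
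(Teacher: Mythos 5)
Your argument is correct and follows essentially the same route as the paper: both exploit that $\textrm{ALG}$ and $\textrm{OPT}$ are quadratic in $b$ with positive denominator, reduce the integer maximisation to the roots of an explicit quadratic obtained from the ratio, and conclude with a constant-size candidate set; the paper merely works with the discrete difference $g(b)-g(b-1)$ (whose sign pattern pins down the maximiser as $0$ or $\lfloor b_2\rfloor$) where you use the continuous derivative and evaluate all $O(1)$ floor/ceiling/endpoint candidates, which is equally valid. One cosmetic remark: the cancellation of the cubic term in $f'g-fg'$ holds for \emph{any} two quadratics (the leading terms are both $2a_2c_2b^3$), not because the leading coefficients have opposite signs, so that step is even more immediate than you suggest.
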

\begin{proof}
The fraction $\textrm{ALG}(b,c,d,e) / \textrm{OPT}(b,d)$ is maximized over $b$, since the adversary gets to choose $b$
and wants to maximize the ratio.
 
Both costs ALG$(b,c,d,e)$ and OPT$(b,d)$ are quadratic in $b$. Hence the ratio
ALG/OPT as a function of $b$ is continuous and derivable.
We consider the function 
\[
    g(b)=ALG(b,c,d,e)/OPT(b,d)
\]
and try to identify a maximizer in $\{0,1,\ldots,n-c-d\}$. For this purpose we study the sign of $g(b)-g(b-1)$. Since OPT is positive it is equivalent to analyze the sign of the function 
\begin{align*}
    h(b) =& \frac2x  (ALG(b,c,d,e)OPT(b-1,d)-ALG(b-1,c,d,e)OPT(b,d))\\
         =& -2e(b+d) + p n(n+1)(n-c-d-2b+1) 
         \\&
         - x(b+d)(b(n-c+d+1) - (1+d)(n-c-d+1))
\end{align*}
where the purpose of the factor $2/x$ is to simplify the expression.  
Its second derivative in $b$ is $-2x(n-c+d-1)<0$. Hence $h$ is quadratic in $b$, and first increasing, then decreasing. This means that $g(b) - g(b-1)$ is also first increasing in $b$, then decreasing. We distinguish different cases.

In case $h$ is non-positive, we know that $g$ is non-increasing, and hence maximized at $b=0$.
Otherwise $h$ has two roots $b_1 < b_2$.  At this point we consider $\lfloor b_2 \rfloor$.  If it is negative or less than $b_1$, then we know that $h$ is non-positive at integral values $b \geq 0$, and again $g$ is maximized at $b=0$.

In the remaining case, $\lfloor b_2 \rfloor$ is non-negative and at least $b_1$. Hence $\lfloor b_2 \rfloor$ is the last integer where $h$ is non-negative and therefore is the integral maximizer of $g$. The roots are
\[
	b_{1,2}=\frac{(n-2d^2-c-d+1)x-2pn(n+1)-2e \pm \sqrt{\Delta}}{2x(n-c+d+1)}
\]
for
\begin{align*}
	\Delta= & ((n-2d^2-c-d+1)x-2pn(n+1)-2e)^2 \\
        & +4x(n-c+d+1)((n-c-d+1)(pn(n+1)+xd(d+1))-2de).
\end{align*}
We observe that 
\[
    h(n-c-d) = -2e(n-c)-pn(n+1)(n-c-d-1)-x(n-c)(n-c+1)(n-c-d-1)
\]
is negative, since $c+d<n$ is implied by Lemma~\ref{lem:suffix_Ep}. This implies $\lfloor b_2 \rfloor \leq n-c-d$, as required.

In summary, if $\Delta\leq 0$ or  $\lfloor b_2 \rfloor < \max\{0,b_1\}$, then $g$ is maximized at $b=0$, otherwise it is maximized at $b=\lfloor b_2 \rfloor$.  As a result the stop ratio $R(c,d,e)$ can be computed in constant time.
\end{proof}


\begin{figure}[h!]
	\begin{center}
	\begin{tikzpicture}[scale=0.5]
\foreach \x/\y/\c in {0/0/white,1/0/white,2/0/black,3/0/gray!30,4/0/gray!30,5/0/gray!30,6/0/gray!30,7/0/gray!30,0/1/white,1/1/white,2/1/white,3/1/gray!30,4/1/black,5/1/black,6/1/gray!30,0/2/white,1/2/white,2/2/white,3/2/black,4/2/gray!30,5/2/gray!30,0/3/white,1/3/white,2/3/white,3/3/white,4/3/white,0/4/white,1/4/white,2/4/white,3/4/white,0/5/white,1/5/white,2/5/white,0/6/white,1/6/white,0/7/white}           \draw[fill=\c]  (\y,-\x) rectangle ++(1,1); 
\draw[|->,color=red] (0.5,0.5) -- (0.5,-0.5) -- (1.5,-0.5) -- (1.5,-1.5) -- (3.5,-1.5) -- (3.5,-3.5);
\draw (4,1.6) node {$d$};
\draw (-0.5,-3) node {$c$};
\end{tikzpicture}
	\end{center}
	\caption{The grid as used in the procedure to compute the equilibrium schedule. 
	Marked cells are black or gray. In red the boundary path.}	
	\label{fig:grid}
\end{figure}
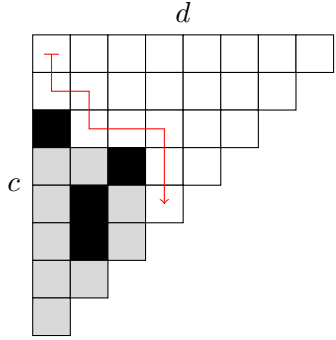

Finally we explain how to determine the optimal algorithmic strategy.
\begin{theorem}
	Assuming Conjecture~\ref{conjecture}, the optimal strategy for the algorithm in the adaptive setting can be computed in time $O(n^3)$.
\end{theorem}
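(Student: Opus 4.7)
The plan is to solve the minimax game by backward induction on the game tree. For every state $(c,d,e)$ reachable from the root set
\[
V(c,d,e) \;=\; \min\bigl\{\,R(c,d,e),\;\max\bigl(V(c+1,d,e+n-c),\,V(c,d+1,e+n-c)\bigr)\,\bigr\},
\]
with base case $V(c,d,e)=R(c,d,e)$ when $c+d=n-1$, which is guaranteed by Lemma~\ref{lem:suffix_Ep}. Lemma~\ref{lem:postponing} together with Conjecture~\ref{conjecture} lets me restrict to postponing two-phase strategies, so this recursion indeed captures the game value, and the sought quantity is $V(0,0,0)$. By Lemma~\ref{lem:R} every single evaluation of $R$ costs $O(1)$.

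To achieve the $O(n^3)$ bound my plan is to represent, for each cell $(c,d)$, the function $V(c,d,\cdot)$ as a piecewise-linear non-decreasing function of $e$, stored as a list of affine pieces. Since $R(c,d,\cdot)$ is the upper envelope of at most $n-c-d$ affine functions of $e$ (one per choice of $b$ in Lemma~\ref{lem:R}), it contributes at most $n-c-d$ pieces. I would process the $O(n^2)$ cells in decreasing order of $c+d$: at each cell, combine the already-computed children by an upper-envelope (with the $e\mapsto e+n-c$ shift) and then intersect with $R(c,d,\cdot)$ by a lower-envelope. Both operations take time linear in the total number of pieces involved. The central structural claim, which I would prove by induction on $n-c-d$, is that $V(c,d,\cdot)$ has at most $O(n-c-d)$ pieces, so each cell is processed in $O(n)$ time and the whole DP runs in $O(n^2)\cdot O(n)=O(n^3)$.

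Once this DP table is built, the optimal action at $(0,0,0)$ is the argmin of the outer $\min$ at $e=0$, read off in $O(1)$. Every subsequent state lying on the equilibrium play is also already tabulated, so the algorithm spends only $O(1)$ per step to follow the plan. If at some point the adversary makes a non-equilibrium response, the new state $(c',d',e')$ may fall outside the tabulated path; the algorithm then restarts the DP on the subgame rooted at this state. Since that subgame involves only $r=n-c'-d'$ untested jobs, applying the same procedure solves it in $O(r^3)$ time.

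The main obstacle is the linear piece-count bound. A naive accounting would roughly double the number of pieces at each upper-envelope step, producing an exponential blow-up. My plan for this step is an amortized charging argument: monotonicity of $V(\cdot)$ in $e$ (inherited from the $e$-monotonicity of $R$ and propagated through $\min$ and $\max$) lets me discard any piece of $V(c+1,d,\cdot)$ or $V(c,d+1,\cdot)$ that is dominated, after the $\min$ with $R(c,d,\cdot)$, by some $b$-line of the $R$-envelope; conversely each surviving breakpoint of $V(c,d,\cdot)$ can be charged to a distinct $b$-line of $R$ at some descendant. This charging is what I expect to be the technical heart of the proof, and it is what converts the naive $O(n^4)$ state count into the claimed $O(n^3)$ running time.
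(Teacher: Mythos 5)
Your backward-induction formulation is a correct characterization of the game value and a genuinely different route from the paper: under Conjecture~\ref{conjecture} together with Lemmas~\ref{lem:postponing} and~\ref{lem:suffix_Ep}, the algorithm's only choices at a state $(c,d,e)$ are ``stop'' or ``test'', so the recursion $V(c,d,e)=\min\{R(c,d,e),\max(V(c+1,d,e+n-c),V(c,d+1,e+n-c))\}$ with base case $V=R$ at $c+d=n-1$ does compute the competitive ratio, and Lemma~\ref{lem:R} gives $O(1)$ per evaluation of $R$. The gap is in the complexity claim. Everything rests on the assertion that $V(c,d,\cdot)$ has $O(n-c-d)$ linear pieces, and you neither prove it nor sketch an argument precise enough to check. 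You do establish that $R(c,d,\cdot)$ is a convex, increasing upper envelope of $n-c-d$ lines in $e$, but $V$ is built by alternating $\min$ and $\max$, which destroys convexity after one level; two non-convex piecewise-linear functions with $p_1$ and $p_2$ breakpoints can cross $\Theta(p_1+p_2)$ times, so the naive recurrence lets the piece count grow geometrically over the $n$ levels of the grid. Monotonicity of $V$ in $e$ (which does hold) says nothing about the number of breakpoints, and the charging you propose --- each surviving breakpoint charged to a distinct $b$-line of $R$ at some descendant --- would, even if injective, only bound the piece count by the total number of $b$-lines in the subgrid below $(c,d)$, which is $\Theta((n-c-d)^3)$, not $O(n-c-d)$. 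Without that lemma your method yields at best $O(n^4)$, by tabulating $V$ over all reachable integer test costs $e$ (there are $\Theta(cd)$ of them at cell $(c,d)$), and possibly worse if one insists on the envelope representation.

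The paper avoids propagating any function of $e$ by working with the equivalent path formulation of the same min-max value, $\max_P\min_{(c,d)\in P}R(c,d,e_P(c,d))$. It maintains a monotone set of marked cells that provably cannot lie on an improving adversarial path and repeatedly inspects the single boundary path of the unmarked region; the crucial structural fact is that this one path simultaneously maximizes the test cost $e$ --- and hence, by monotonicity of $R$ in $e$, the stop ratio --- at every cell it visits, among all unmarked paths. Each of the $O(n^2)$ iterations either marks a cell or improves the candidate ratio and costs $O(n)$, giving $O(n^3)$ with no need to control the shape of $V(c,d,\cdot)$. To rescue your approach you would need either to actually prove the piece-count lemma or to find an exchange argument collapsing the dependence on $e$ to a single extremal value per cell, which is in essence what the paper's boundary-path argument accomplishes.
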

\begin{proof}
	We first show how to compute an optimal adversarial strategy, in form of a
path $P$ from the root to some final cell. Denote by $R^*$ the minimal stop
ratio along $P$. The algorithm will then walk along $P$, testing a new job at
each step, until it reaches the cell with stop ratio $R^*$, at which point it
switches to the execution phase.  In case the adversary behavior differs from
$P$, the procedure needs to restart and computes an optimal strategy for the
adversary based on the current situation. For the ease of
presentation we don't describe this generalized procedure.

	Our algorithm to compute the optimal adversarial strategy works as follows.
	At any moment in time it stores a path $P$ and its minimal stop ratio $R^*$
	and repeatedly tries to find a path with larger minimal stop ratio.  In the
	formal description of Algorithm~\ref{alg:adaptive} we choose to represent $P$
	as a list of tuples $(c,d,e)$, where $(c,d)$ is a grid cell and $e$ the
	associated test cost.  

	Some cells of the grid are marked with the meaning that any path traversing
	any of the marked cells is guaranteed to have a minimal stop ratio at most
	$R^*$.  The marked cells form a \emph{combinatorial tableau} in the sense,
	that if cell $(c,d)$ is marked, then so are all cells $(c',d')$ with $c'\geq
	c$ and $d'\leq d$.

	The procedure stops when the root cell $(0,0)$ is marked and returns $P$ as
	the optimal adversarial strategy.

	For now assume that the root cell is not marked. Then there is a path $P'$
	which follows the boundary of the marked cells.  This path has the property
	that for every cell $(c,d)$ on that path, the associated test cost $e$ is
	maximal among all paths that don't traverse a marked cell and therefore
	maximizes the stop ratio at $(c,d)$.  Indeed, if one replaces two steps
	$(c',d')\rightarrow(c',d'+1)\rightarrow(c'+1,d'+1)$ in a path by
	$(c',d')\rightarrow(c'+1,d')\rightarrow(c'+1,d'+1)$ then the test cost of the
	cell $(c'+1,d'+1)$ is increased by $1$ and so are the test costs of the
	subsequent cells on the path. The only path which does not allow this
	modification is the boundary path $P'$.

	Two things can happen with $P'$. Either it contains a  cell  with stop ratio at
	most $R^*$.  In what case this cell can be marked.  We can safely mark as
	well as all cells to its left and below, because any boundary path traversing
	these cells must traverse $(c,d)$ as well, even if more cells get marked
	later on. 

	Or the path has minimal stop ratio $R'$ strictly larger than $R^*$. In what
	case $P'$ is selected as the current best known adversarial strategy and
	$R^*$ is increased to $R'$.  As a result the cell on $P'$ with stop ratio
	$R^*$ will be marked in the next iteration as well. Hence a measure of
	progress of this procedure is the number of marked cells, which limits the
	number of iterations to $O(n^2)$, and probably much less in practice.

	Inspecting the boundary path takes $O(n)$ time. The total time spend on
	marking cells is $O(n^2)$. Indeed when marking cells to the left and below a
	cell $(c,d)$ on the boundary path, we know that cell $(c+1,d-1)$ is already
	marked, and it suffices to mark all cells $(c,d-1),(c,d-2),\ldots$, until the
	boundary of the grid or a marked cell is reached and to  to mark all cells
	$(c+1,d),(c+2,d),\ldots$, until the boundary of the grid or a marked cell is
	reached.  As a result, the time spend on marking cells is constant for each cell.
	This means that the overall time complexity is $O(n^3)$.
\end{proof}

\begin{algorithm}
\caption{Algorithm computing the optimal strategy in the adaptive model}
\label{alg:adaptive}
	\begin{algorithmic}[0]
	\Procedure{BoundaryPath}{$R^*,\textrm{grid}$}
	\State $(c,d,e)=(0,0,0)$
	\State Initially $P$ contains only $(0,0,0)$
	\While{$c + d < n$}					
		\If{$(c+1,d)$ is not marked}	\Comment{privilege down steps}
			\State $c=c+1$
		\Else
			\State $d=d+1$
		\EndIf
		\If{$R(c,d) \leq R^*$}	\Comment{boundary path does not improve $R^*$}
			\State mark all cells $(c',d')$ with $c'\geq c, d'\leq d$
			\State \Return None
		\EndIf
		\State append $(c,d,e)$ to $P$
		\State $e=e+c$
	\EndWhile
	\State \Return{$P$}
	\EndProcedure	

	\Procedure{OptimalAdversary}{}
	\State grid = $n\times n$ grid without marked cells
	\State $R^* = 1$
	\While{cell (0,0) is not marked}
		\State $P$ = BoundaryPath($R^*,\textrm{grid}$)
		\If{$P$ is not None}					\Comment{$P$ improves $R^*$}
			\State $R^*$ = minimum stop ratio along $P$
			\State best = $(P,R^*)$		\Comment{best path-ratio pair}
		\EndIf
	\EndWhile
	\State \Return best
	\EndProcedure
	
	\Procedure{OptimalAlgorithm}{}
	\State $(P,R^*) = \textrm{OptimalAdversary()}$
	\State $(c,d,e) = (0,0,0)$
	\While{$R(c,d,e) > R^*$}			
		\State Test next job, execute immediately if it is short
		\State $(c, d, e)$ is next cell on $P$
		\Comment assumes adversary follows optimal strategy $P$.
	\EndWhile			\Comment{stop at minimum stop ratio}
	\State Execute untested all remaining jobs, followed by the $d$ postponed tested long jobs
	\EndProcedure	
	\end{algorithmic}
\end{algorithm}

\section{Experiments}
\label{sec:experiments}

We conducted experiments to verify Conjecture~\ref{conjecture}, for up to 10
jobs, 128 uniformly spread values\footnote{We used a power of two in the hope
of reducing rounding errors.  In addition we conducted experiments using the
Python module \textsl{fractions} which completely avoids these errors.}
$p\in(0,100]$, as well as 128 uniformly spread values $x\in(0,10]$, and for
both the adaptive and non-adaptive model.  By analyzing the game tree of each
instance, no counterexample to Conjecture~\ref{conjecture} was found.  The
programs used for the experiments are provided in the companion webpage
\url{https://www.lip6.fr/Christoph.Durr/Scheduling-w-Oracle/}. Note that these
experiments explore the game tree which has $2^{2n}$ leafs, so $n=10$ is
roughly the limit up to which we can test the conjecture.  The algorithm from
the previous section is based on the two-phase assumption and has theoretical
complexity $O(n^3)$, which in practice is $O(n^2)$ because it seems to
terminate after a constant number of iterations.  Hence an implementation of
this algorithm can clearly handle roughly $10.000$ jobs.

Just for curiosity we plot the algorithmic strategies in terms of number of
tested jobs, see Figures~\ref{fig:exp-n6} and \ref{fig:exp-n6:more}. It is
interesting to observe that the strategies are quite different in the adaptive
and in the non-adaptive model.  We also plot the competitive ratio, and not
surprisingly observe that it is worse when $x$ is large and $p$ small, while
tending to $1$ for larger values of $p$.  Another measure that is interesting
to extract from these experiments, is the gain of adaptivity. This is defined
as the competitive ratio in the non-adaptive model compared to its counterpart
in the adaptive model, and has been studied theoretically in contexts of query
algorithms \cite{gupta_adaptivity_2017,dean_adaptivity_2005}.   We can observe
that the problem has a small gain of adaptivity, in the order of 2\%. 

We know that assuming Conjecture~\ref{conjecture}, in the non-adaptive model,
all equilibrium schedules are of the form $(Tx)^*(Tp)^*(Ex)^*(Ep)^*$. We
observe that in the adaptive model for most instances the equilibrium
schedules is also of this form.  Figure~\ref{fig:exp-n6} shows instances where
this is not the case.  For larger value of $n$, we observed quite a variety of
these \emph{noticeable schedules}, as we call them.  They do not seem to obey
a particular structure, which could be exploited in a more time efficient
algorithm for the adaptive model.

\begin{figure}[hp]
	\includegraphics[width=\textwidth]{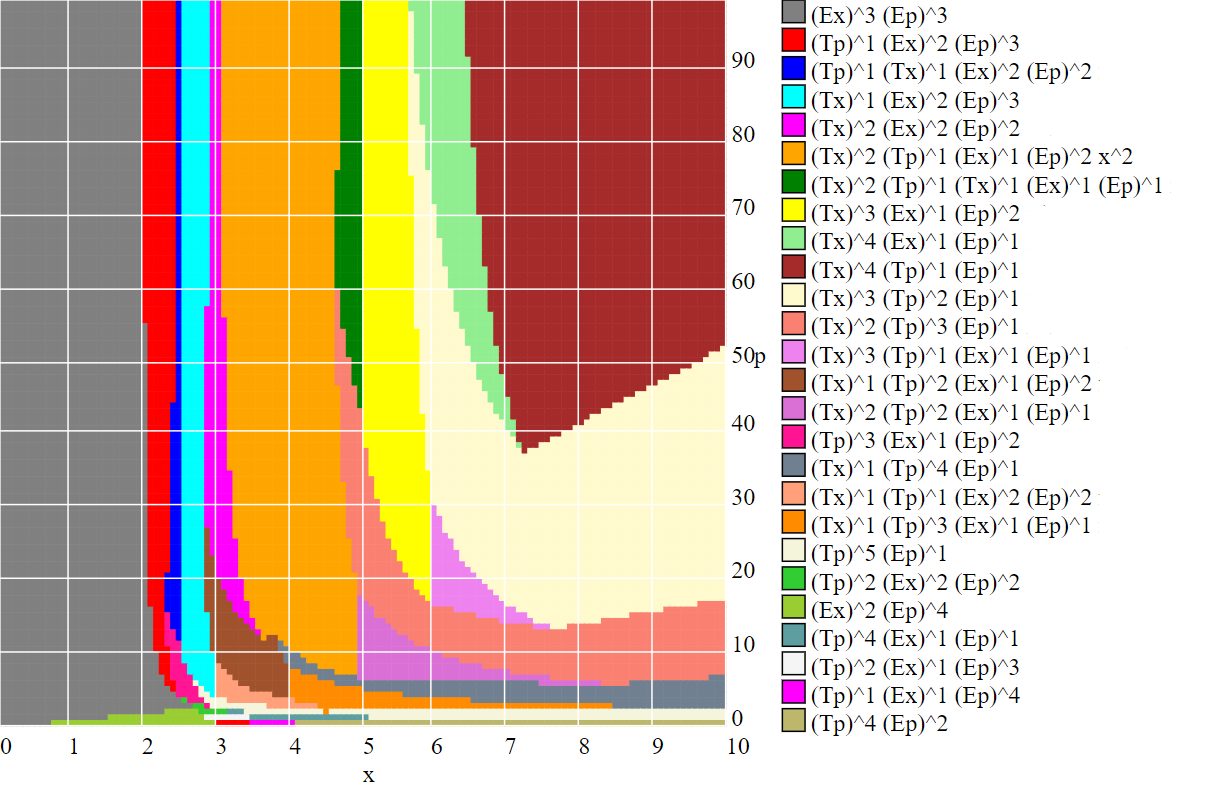}
	\\
	\includegraphics[width=0.7\textwidth]{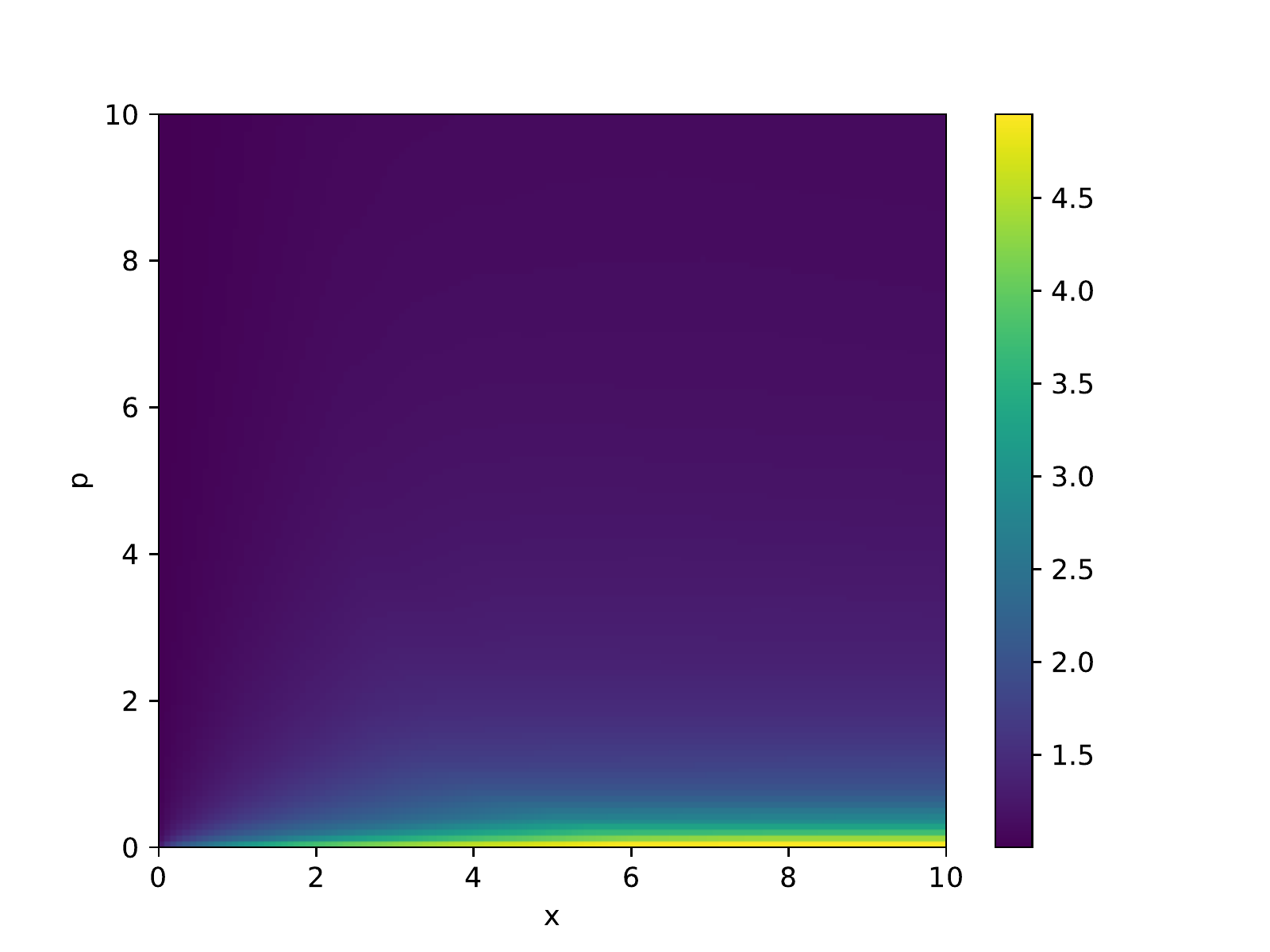}
\caption{Experiments made for $n=6$ jobs. Every $(x,p)$ point corresponds to the game with specific values for $x$ and $p$. For each point the equilibrium schedule is computed, and depicted in the first plot. In this example two equilibrium schedules differ from the pattern $(Tx)^*(Tp)^*(Ex)^*(Ep)^*$, namely $TpTxExExEpEp$, and $TxTxTpTxExEp$.\\
Second plot shows the competitive ratio with the color scheme documented to its right.
}
\label{fig:exp-n6}
\end{figure}

\begin{figure}[hp]
 \begin{minipage}[c]{0.635\textwidth}
	\noindent
	\includegraphics[width=\textwidth]{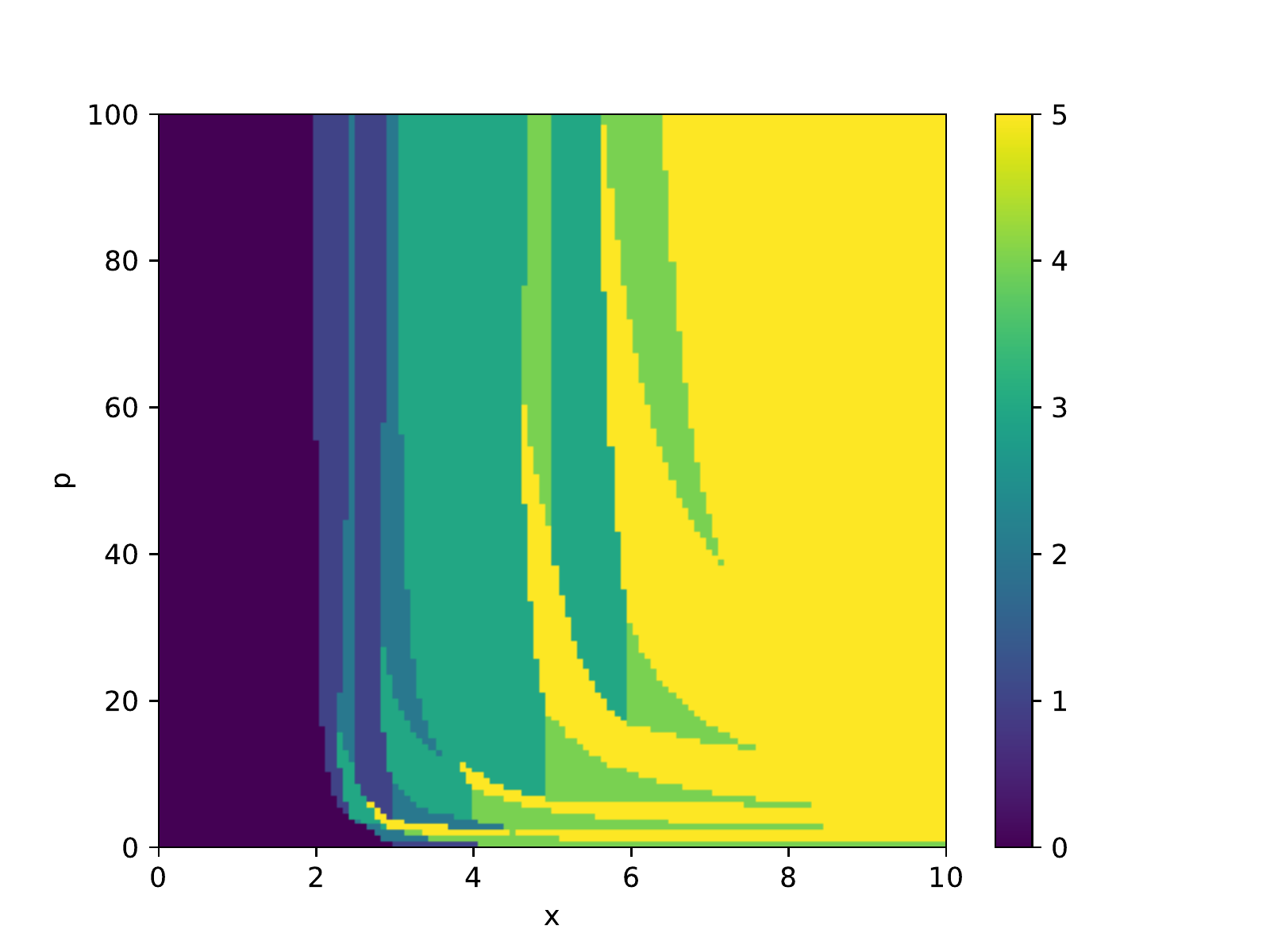}
	\\
	\includegraphics[width=\textwidth]{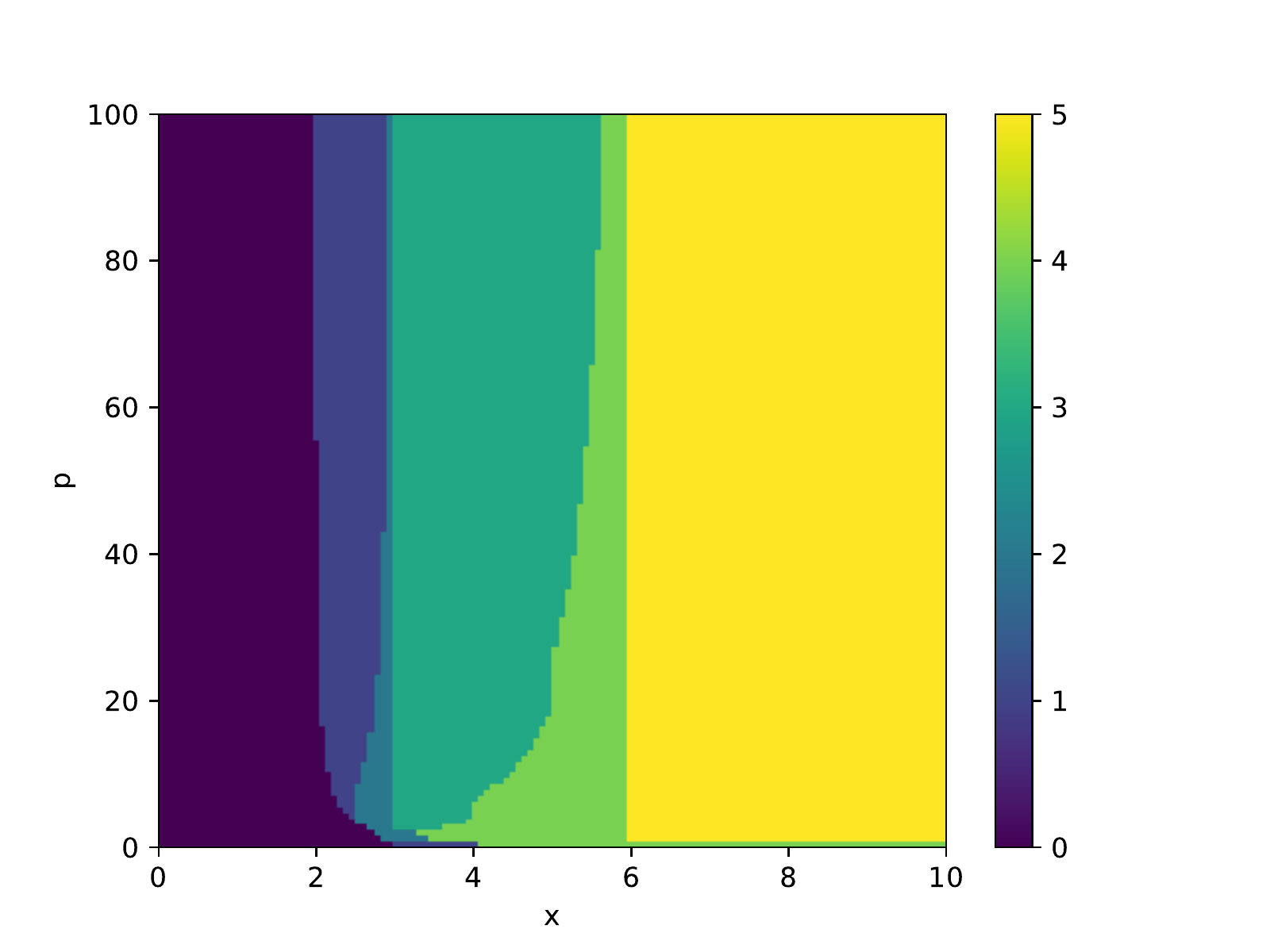}
	\\	
	\includegraphics[width=\textwidth]{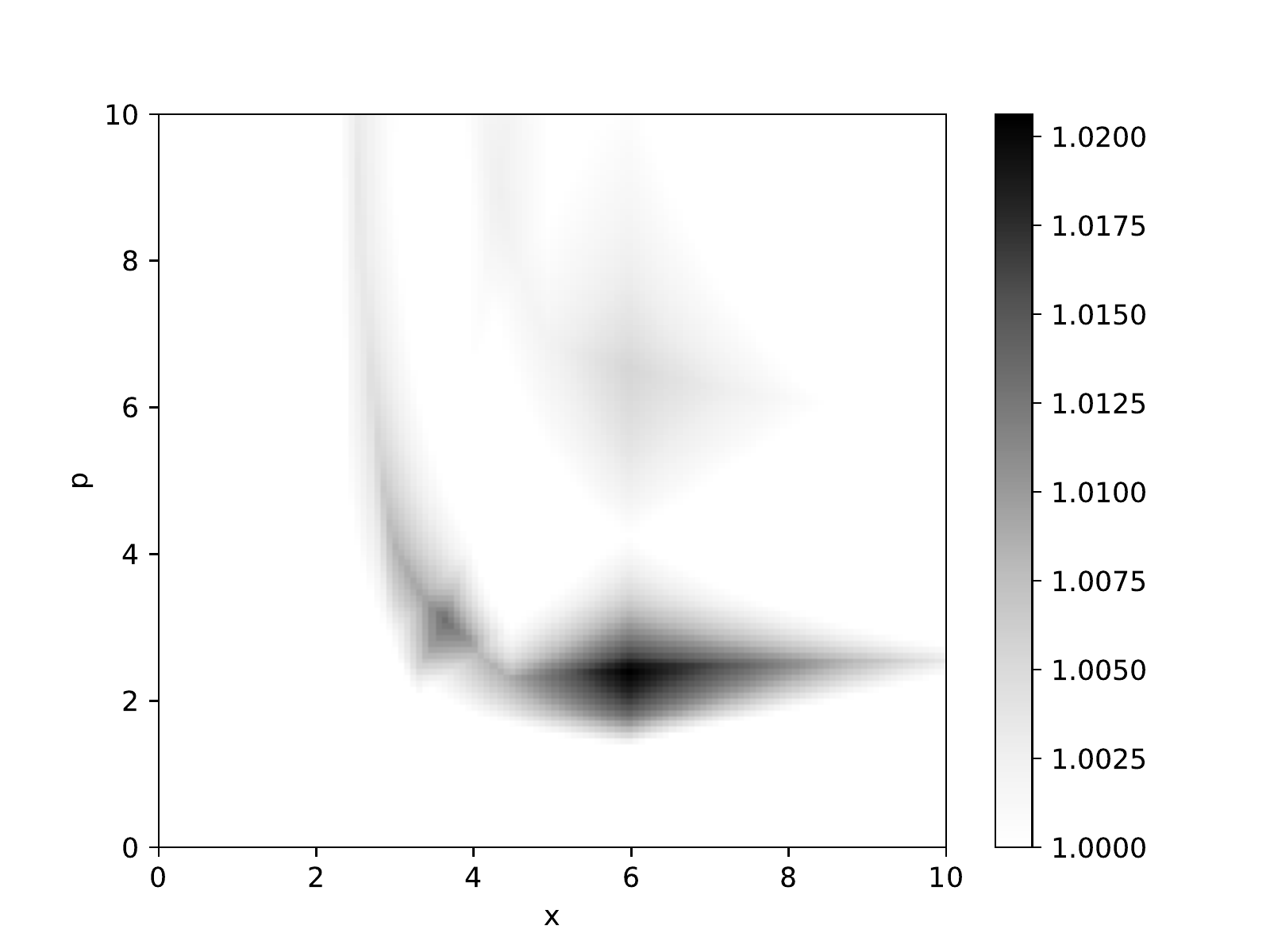}
  \end{minipage}\hfill
  \begin{minipage}[c]{0.3\textwidth}
\caption{More experiments made for $n=6$ jobs. The first plot is a simplification of the first plot of Figure~\ref{fig:exp-n6}, showing only the number of tests done by the algorithm in the equilibrium schedule. 
\\
 The second plot compares the situation with the non-adaptive setting.  It is interesting to observe that the region of points $(x,p)$ for which the equilibrium schedule has the same fixed number of tests seems to be connected in the non-adaptive case, while it is not in the adaptivity case.
\\
The third plot shows the gain of adaptivity, with a different scale for $p$ for improved readability.  It is interesting to see that it is not monotone in $p$ nor in $x$.}
\label{fig:exp-n6:more}
  \end{minipage}
\end{figure}

\clearpage
\section{Conclusion}

We leave Conjecture~\ref{conjecture} as an open problem. Future research
directions include the improvement of the running times, for example by a
subtle use of binary search.  In addition the next step could be the study of
a more general problem, where every job $j$ is known to have a processing time
in the given interval $[\underline{p}_j, \overline{p}_j]$ and has a testing
time $q_j$.  And finally, randomization clearly helps against the oblivious
adversary, already by initially shuffling the given jobs. Hence it would be
interesting to analyze the randomized competitive ratio.

\section*{Acknowledgment}

We would like to thank Thomas Erlebach for helpful discussions as well as
several anonymous referees  for helpful comments on a previous version of this
manuscript.

\bibliographystyle{plain}
\bibliography{all}

\end{document}